\definecolor{shadecolor}{rgb}{0.917969, 0.917969, 0.917969}
\theoremstyle{plain}
\newtheorem{thm}{\protect\theoremname}
  \theoremstyle{definition}
  \newtheorem{defn}[thm]{\protect\definitionname}
  \theoremstyle{plain}
  \newtheorem{lem}[thm]{\protect\lemmaname}
  \theoremstyle{definition}
  \newtheorem{problem}[thm]{\protect\problemname}
  \theoremstyle{plain}
  \newtheorem{fact}[thm]{\protect\factname}
  \theoremstyle{remark}
  \newtheorem{claim}[thm]{\protect\claimname}
  \theoremstyle{remark}
  \newtheorem{rem}[thm]{\protect\remarkname}
  \theoremstyle{plain}
  \newtheorem*{lem*}{\protect\lemmaname}
  \providecommand{\claimname}{Claim}
  \providecommand{\definitionname}{Definition}
  \providecommand{\factname}{Fact}
  \providecommand{\lemmaname}{Lemma}
  \providecommand{\problemname}{Problem}
  \providecommand{\remarkname}{Remark}
\providecommand{\theoremname}{Theorem}
\title{Counting matchings with $k$ unmatched vertices in planar graphs\footnote{This work was carried out while the author was a PhD student at Saarland University in Saarbr\"ucken, Germany, 
and while visiting the Simons Institute for the Theory of Computing in Berkeley, USA. 
The material also appears in his PhD thesis \cite{DBLP:phd/dnb/Curticapean15}.}
}
\author{Radu Curticapean\thanks{Institute for Computer Science and Control, Hungarian Academy of Sciences (MTA SZTAKI), Budapest, Hungary. The author is supported by ERC grant PARAMTIGHT, no.~280152.}}
\date{}
\begin{document}

\maketitle

% my macros %%%%%%%%%%%%%%%%%%%%%%%%%%%%%%%%%%%%%%%%%%%%%%%

\global\long\def\sharpP{\mathsf{\#P}}
\global\long\def\sharpETH{\mathsf{\#ETH}}

\global\long\def\FPT{\mathsf{FPT}}
\global\long\def\sharpWone{\mathsf{\#W[1]}}
\global\long\def\XP{\mathsf{XP}}

\global\long\def\pMatch{\mathsf{\#Match}}
\global\long\def\pClique{\mathsf{\#Clique}}

\global\long\def\pDefMatch{\mathsf{\#PlanarDefectMatch}}
\global\long\def\pRestrDefMatch{\mathsf{\#RestrDefectMatch}}
\global\long\def\pApexPerfMatch{\mathsf{\#ApexPerfMatch}}

\global\long\def\usat{\mathrm{usat}}
\global\long\def\PM{\mathcal{PM}}
\global\long\def\DM{\mathcal{DM}}
\global\long\def\M{\mathcal{M}}
\global\long\def\PerfMatch{\mathsf{\#PerfMatch}}
\global\long\def\MatchSum{\mathsf{\#MatchSum}}

\global\long\def\leqFptT{\leq_{\mathit{fpt}}^{T}}
\global\long\def\leqFptLin{\leq_{\mathit{fpt}}^{\mathit{lin}}}

\global\long\def\vec#1{\mathbf{#1}}

\begin{abstract}
We consider the problem of counting matchings
in planar graphs. While \emph{perfect} matchings in planar graphs
can be counted by a classical polynomial-time algorithm \cite{Kasteleyn1961,Temperley.Fisher1961,Kasteleyn1967},
the problem of counting all matchings (possibly containing unmatched
vertices, also known as \emph{defects}) is known to be $\sharpP$-complete
on planar graphs \cite{Jerrum1987}. 

To interpolate between the hard case of counting matchings and the easy case of counting perfect matchings,
we study the parameterized problem of counting matchings with exactly $k$
unmatched vertices in a planar graph $G$, on input $G$ and $k$.
This setting has a natural interpretation in statistical physics,
and it is a special case of counting perfect matchings in $k$-apex
graphs (graphs that can be turned planar by removing at most $k$ vertices).

Starting from a recent $\sharpWone$-hardness proof for
counting perfect matchings on $k$-apex graphs \cite{Curtican.Xia2015},
we obtain that counting matchings with $k$ unmatched vertices in planar graphs is
$\sharpWone$-hard.
In contrast, given a plane graph $G$ with $s$ distinguished faces,
there is an $\mathcal{O}(2^{s}\cdot n^{3})$ time algorithm for counting
those matchings with $k$ unmatched vertices such that all unmatched
vertices lie on the distinguished faces. This implies an $f(k,s)\cdot n^{O(1)}$
time algorithm for counting perfect matchings in $k$-apex graphs
whose apex neighborhood is covered by $s$ faces.{\small \par}
\end{abstract}

\section{Introduction}

The study of the computational complexity of counting problems was introduced in
a seminal paper by Valiant~\cite{Valiant1979a} that established
the class $\sharpP$ and proved counting perfect matchings in
an unweighted bipartite graph to be $\sharpP$-complete. In a companion
paper \cite{DBLP:journals/siamcomp/Valiant79}, he also showed that
counting all (not necessarily perfect) matchings in a graph is $\sharpP$-complete
as well. Even prior to these initial complexity-theoretic results,
problems related to matchings and perfect matchings played an
important role in various scientific disciplines.
For instance, the number of perfect matchings in a bipartite graph $G$ arises in
enumerative combinatorics and algebraic complexity as the \emph{permanent} \cite{Buergisser2000,Agrawal2006} of the bi-adjacency matrix associated with $G$.
In statistical physics, counting perfect matchings amounts to evaluating the \emph{partition
function} of the \emph{dimer model} \cite{Kasteleyn1967,Kasteleyn1961,Temperley.Fisher1961}:
The physical interpretation is that vertices are discrete points
that are occupied by atoms, and edges represent bonds between
the corresponding atoms. The partition function of $G$ is then essentially
defined as the number of perfect matchings in $G$, and it encodes
thermodynamic properties of the associated system. Likewise, the problem
of counting all matchings is known to statistical physicists as the
\emph{monomer-dimer model} \cite{Jerrum1987}; in this setting, some
points may be unoccupied by atoms. In the intersection of chemistry
and computer science, the number of matchings of a graph (representing
a molecule) is known as its \emph{Hosoya index} 
\cite{citeulike:4738899}.

In view of these applications and the $\sharpP$-hardness of counting (perfect) matchings, 
several relaxations were considered
to cope with these problems. Among these, \emph{approximate} counting and the restriction
to \emph{planar} graphs proved most successful. However, once we start
incorporating these relaxations, the seemingly very similar problems
of counting matchings and counting perfect matchings exhibit stark
differences:
\begin{itemize}
\item On planar graphs, perfect matchings can be counted in polynomial time
by the classical and somewhat marvelous FKT method \cite{Kasteleyn1967,Kasteleyn1961,Temperley.Fisher1961},
which reduces this problem to the determinant. The problem of counting
all matchings is however $\sharpP$-complete on planar graphs \cite{Jerrum1987}.
In particular, the algebraic machinery in the FKT method breaks down
for non-perfect matchings.
\item It was shown that the number of matchings in a graph admits a polynomial-time
randomized approximation scheme (FPRAS) on general graphs \cite{DBLP:journals/siamcomp/JerrumS89}.
By a substantial extension of this approach, an FPRAS for counting
perfect matchings in bipartite graphs was obtained \cite{Jerrum.Sinclair2004}
-- but despite great efforts, no FPRAS is known for general graphs.
\end{itemize}
In the present paper, we focus on the differing behavior of matchings
and perfect matchings on planar graphs. To this end, we study the
problem $\pDefMatch$ of counting matchings with $k$ unmatched vertices
(which we call $k$-defect matchings) in a planar graph $G$, on input
$G$ and $k$. This problem is clearly $\sharpP$-hard under Turing
reductions, as the $\sharpP$-hard number of matchings in $G$ can
be obtained as the sum of numbers of $k$-defect matchings in $G$
for $k=0,\ldots,|V(G)|$. On the other hand, $\pDefMatch$ can easily
be solved in time $|V(G)|^{\mathcal{O}(k)}$, as we can simply enumerate
all $k$-subsets $X \subseteq V(G)$ that represent potential defects,
count perfect matchings in the planar graph $G-X$ by the FKT method, and sum up 
these numbers.

\subsection{\label{sub:Parameterized-counting-problems}Parameterized counting
problems}

The fact that $\pDefMatch$ is $\sharpP$-hard and polynomial-time
solvable for constant $k$ suggests that this problem benefits from
the framework of \emph{parameterized counting complexity} \cite{Flum.Grohe2004}.
This area is concerned with \emph{parameterized} counting problems,
whose instances $x$ come with \emph{parameters}
$k$, such as $\pDefMatch$ or the problem $\pClique$ of counting
$k$-cliques in an $n$-vertex graph. Intuitively, the parameterized
problem $\pDefMatch$ considers $k$-defect matchings in planar graphs
with $k\ll n$, and the physical interpretation in terms of the monomer-dimer
model is that each configuration of the system admits a small
number of ``vacant'' points that are not occupied by atoms.

Note that both $\pDefMatch$ and $\pClique$ can be solved in time
$n^{\mathcal{O}(k)}$ and are hence in the so-called class $\XP$.
One important goal for such problems lies in finding algorithms with
running times $f(k)\cdot|x|^{\mathcal{O}(1)}$ for computable functions
$f$, which renders the problems \emph{fixed-parameter tractable}
(FPT) \cite{Flum.Grohe2004,Flum.Grohe2006}. If no FPT-algorithms
can be found for a given problem, one can try to show its $\sharpWone$-hardness.
This essentially boils down to finding a parameterized reduction from
$\pClique$, and it shows that FPT-algorithms for the problem would imply
FPT-algorithms for $\pClique$, which is considered unlikely.

For instance, to prove $\sharpWone$-hardness of $\pDefMatch$ by
reduction from $\pClique$, we would need to find an algorithm that
counts $k$-cliques of an $n$-vertex graph in time $f(k)\cdot n^{\mathcal{O}(1)}$
with an oracle for $\pDefMatch$. Additionally, the algorithm should
only invoke the oracle for counting $k'$-defect matchings with $k'\leq g(k)$.
Here, both the function $f$ appearing in the running time and the
blow-up function $g$ are arbitrary computable functions.

Furthermore, parameterized reductions can also be used to obtain lower
bounds under the exponential-time hypothesis $\sharpETH$, which postulates
that the satisfying assignments to formulas $\varphi$ in $3$-CNF
cannot be counted in time $2^{o(n)}$ \cite{Dell.Husfeldt2014,Impagliazzo.Paturi2001,Impagliazzo.Paturi2001b}.
For instance, it is known that $\pClique$ cannot be solved in time
$n^{o(k)}$ unless $\sharpETH$ fails \cite{Chen.Chor2005}. If we
reduce from $\pClique$ to a target problem by means of a reduction
that invokes only blow-up $\mathcal{O}(k)$, then $\sharpETH$ also
rules out $n^{o(k)}$ time algorithms for the target problem \cite{Lokshtanov.Marx2011}.

\subsection{Perfect matchings with planar-like parameters}

To put $\pDefMatch$ into context, let us survey some parameterizations
for the problem $\PerfMatch$ of counting perfect matchings and see
how these connect to $\pDefMatch$. 
\begin{itemize}
\item The FKT method for planar graphs was extended \cite{Galluccio.Loebl1998,regge2000combinatorial,Curtican.Xia2015}
from planar graphs to graphs of fixed genus $g$, resulting in $\mathcal{O}(4^{g}\cdot n^{3})$
time algorithms for $\PerfMatch$.
\item Polynomial-time algorithms for $\PerfMatch$ were obtained for $K_{3,3}$-free graphs
\cite{Little1974,Vazirani1989} and $K_{5}$-free graphs \cite{Straub.Thierauf2014}.
More generally, for every class of graphs excluding a fixed single-crossing
minor $H$ (that is, $H$ can be drawn in the plane with at most one
crossing), an $f(H)\cdot n^{4}$ time algorithm is known \cite{Curticape2014}.
\item A simple dynamic programming algorithm yields a running time of $3^t \cdot n^{\mathcal{O}(1)}$
for $\PerfMatch$ on graphs of treewidth $t$. By using fast subset convolution \cite{RBR09}, the running time can be improved to $2^{t}\cdot n^{\mathcal{O}(1)}$.
\end{itemize}
Since all of the tractable classes above exclude fixed minors for
fixed parameter values, one is tempted to believe that $\PerfMatch$
could be polynomial-time solvable on \emph{each }class of graphs excluding
a fixed minor $H$, and possibly even admit an FPT-algorithm when
parameterized by the minimum size of an excluded minor. This last possibility was however ruled
out by the following result:\footnote{In fact, recent unpublished work suggests the existence of constant-sized
minors $H$ such that $\PerfMatch$ is $\sharpP$-hard on $H$-minor
free graphs.} 
\begin{itemize}
\item $\PerfMatch$ is $\sharpWone$-hard on $k$-apex graphs \cite{Curtican.Xia2015}.
For $k\in\mathbb{N}$, a graph $G$ is $k$-apex if there is a set $A \subseteq V(G)$ of size $k$ such that $G-A$ is planar. The vertices in $A$ are called \emph{apices}. Since $k$-apex graphs exclude minors on $\mathcal{O}(k)$ vertices, the $\sharpWone$-hardness result for $\PerfMatch$ on
$k$-apex graphs implies $\sharpWone$-hardness of $\PerfMatch$ on
graphs excluding fixed minors $H$ (when parameterized by the minimum size of such an $H$).
\end{itemize}
Note that $\PerfMatch$ can be solved in time $n^{\mathcal{O}(k)}$
on $k$-apex graphs by brute-force in a similar way as $\pDefMatch$.
To cope with the $\sharpWone$-hardness of $\PerfMatch$ in $k$-apex
graphs and potentially obtain faster algorithms, we study two special
cases: 
\begin{enumerate}
\item We consider $\pDefMatch$, which is indeed a special case, as discussed
below.
\item We consider $\PerfMatch$ in $k$-apex graphs whose apices are adjacent
with only a bounded number of faces in the underlying planar graph.
More in Section~\ref{sub: few-faces} of the introduction.
\end{enumerate}

\subsection{\label{sub: from-apices}From $k$ apices to $k$ defects}

To count the $k$-defect matchings in a planar graph $G$, we can
equivalently count perfect matchings in the $k$-apex graph $G'$
obtained from $G$ by adding $k$ independent apex vertices adjacent
to all vertices of $G$: Every perfect matching of $G'$ then corresponds
to a $k$-defect matching of $G$, and likewise, every $k$-defect
matching of $G$ corresponds to precisely $k!$ perfect matchings
of $G'$. Hence $\pDefMatch$ reduces to $\PerfMatch$ on
$k$-apex graphs, even when the apices in these latter graphs form
an independent set and each apex is adjacent with all non-apex vertices.
Note that the $\sharpWone$-hardness for the general problem of $\PerfMatch$
on $k$-apex graphs does a priori not carry over to the special case
$\pDefMatch$, as the edges between apices and the planar graph cannot
be assumed to be complete bipartite graphs in the general problem.

Nevertheless, we show in Section~\ref{sec: Planar-k-defect-matchings} that $\pDefMatch$ is $\sharpWone$-hard.
To this end, we reduce from $\PerfMatch$ on $k$-apex graphs by means
of a ``truncated'' polynomial interpolation where we wish to recover
only the first $k$ coefficients from a polynomial of degree $n$.
The technique is comparable to that used in the first $\sharpWone$-hardness
proofs for counting matchings with $k$ edges \cite{Blaeser.Curtican2012,Curticape2013}.
Interestingly enough, our reduction maps $k$-apex graphs to instances
of counting $k$-defect matchings without incurring any parameter
blowup at all. In particular, we obtain the same almost-tight lower
bound under $\sharpETH$ that was known for $\PerfMatch$ on $k$-apex
graphs \cite{Curtican.Xia2015}.
\begin{thm}
\label{main thm: defect-match} The problem $\pDefMatch$ is $\sharpWone$-hard.
Furthermore, it admits no $n^{o(k/\log k)}$ time algorithm unless the counting exponential-time hypothesis $\sharpETH$
fails.
\end{thm}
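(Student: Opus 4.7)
The plan is to reduce from $\PerfMatch$ on $k$-apex graphs, which is $\sharpWone$-hard and, under $\sharpETH$, admits no $n^{o(k/\log k)}$-time algorithm by the cited result of~\cite{Curtican.Xia2015}. As noted in Section~\ref{sub: from-apices}, the hard instances can be taken with an independent apex set $A=\{a_1,\dots,a_k\}$; write $G:=H-A$ for the planar part, $\pi(D)$ for the number of matchings of $G$ with unmatched vertex set exactly $D$, and $N_v\subseteq[k]$ for the set of apex indices adjacent to $v\in V(G)$. A straightforward case analysis on perfect matchings of $H$ gives
\[
\PerfMatch(H)=\sum_{\substack{D\subseteq V(G)\\|D|=k}}\pi(D)\cdot\mathrm{perm}(B_{A,D})=\bigl[y_1\cdots y_k\bigr]\sum_{|D|=k}\pi(D)\prod_{v\in D}\Bigl(\sum_{i\in N_v}y_i\Bigr).
\]
Applying the standard inclusion--exclusion identity over subsets $S\subseteq[k]$ --- setting $y_i=1$ for $i\in S$ and $y_i=0$ otherwise --- eliminates the formal apex variables and gives
\[
\PerfMatch(H)=\sum_{S\subseteq[k]}(-1)^{k-|S|}Q_S,\qquad Q_S:=\sum_{|D|=k}\pi(D)\prod_{v\in D}|N_v\cap S|.
\]
So it suffices to compute each of the $2^k$ values $Q_S$ using oracle calls to $\pDefMatch$ at parameter exactly $k$.

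To compute a single $Q_S$, I would attach to each $v\in V(G)$ a small planar pendant gadget that depends on the integer $|N_v\cap S|$ and an auxiliary integer parameter $t$. The gadget is drawn inside a face incident to $v$, so planarity of the augmented graph $G_S^{(t)}$ is preserved. A direct analysis of the matchings of $G_S^{(t)}$ should show that $p(t):=\pDefMatch(G_S^{(t)},k)$ is a polynomial in $t$ of degree at most $n=|V(G)|$, whose coefficients decompose the weighted defect counts $\sum_{|D|=j}\pi(D)\prod_{v\in D}|N_v\cap S|$ across $j=0,1,\dots,n$. Crucially, the target value $Q_S$ (the term with $j=k$) sits among the first $k+1$ coefficients of $p$ in the falling-factorial basis $\binom{t}{i}$, so that the evaluations $p(0),p(1),\dots,p(k)$ form a triangular linear system from which $Q_S$ can be recovered without needing to know the higher-degree coefficients of $p$. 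This is the ``truncated polynomial interpolation'' mentioned in the introduction, mirroring the gadget-plus-interpolation strategy used for counting $k$-edge matchings in~\cite{Blaeser.Curtican2012,Curticape2013}.

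Combining the two ingredients, $\PerfMatch(H)$ is recovered as an explicit integer linear combination of $2^k\cdot(k+1)$ oracle values, each a call to $\pDefMatch$ at parameter exactly $k$ on a planar graph of size polynomial in $|V(H)|$. The resulting parameterized Turing reduction has \emph{no} blowup in $k$, so both the $\sharpWone$-hardness of $\pDefMatch$ and the $n^{o(k/\log k)}$ lower bound under $\sharpETH$ follow directly from the corresponding results for $\PerfMatch$ on $k$-apex graphs. The main obstacle is the gadget construction in the middle paragraph: one has to exhibit a small planar structure whose matching-generating function simultaneously (i)~encodes the weight $|N_v\cap S|$ at each defect vertex $v$, (ii)~is a polynomial in $t$ of degree $O(n)$, and (iii)~places $Q_S$ into one of the first $k+1$ falling-factorial coefficients, all while every oracle call remains at defect parameter exactly $k$. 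Once such a gadget is in hand, verifying planarity and assembling the final linear combination are routine.
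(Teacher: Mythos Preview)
Your overall strategy coincides with the paper's: inclusion--exclusion over apex subsets $S\subseteq[k]$, reducing $\PerfMatch$ on $k$-apex graphs to weighted $k$-defect counts in the planar part, followed by a pendant-gadget-plus-truncated-interpolation step that recovers each such count from $\pDefMatch$ oracle calls with parameter at most $k$. The gap you flag as the ``main obstacle'' --- the actual gadget satisfying (i)--(iii) --- is precisely the technical core of the paper's proof, and your proposal leaves it unresolved. Without it, the claim that $Q_S$ sits among the first $k{+}1$ falling-factorial coefficients of $p(t)$ is unjustified, and that claim is exactly what makes the truncation work.

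The paper fills this gap with two ingredients you are missing. First, it starts from a sharper source problem (Theorem~\ref{thm: ApexPerfMatch is hard}) in which every non-apex vertex has \emph{at most one} apex neighbour. This forces $|N_v\cap S|\in\{0,1\}$, so your $Q_S$ collapses to an \emph{unweighted} count of $k$-defect matchings whose defects avoid a prescribed vertex set (the intermediate problem $\pRestrDefMatch$); requirement~(i) then disappears, and the gadget need only distinguish two vertex classes rather than encode an arbitrary integer weight. Second, the concrete gadget is an $\ell$-\emph{rake}: a matching of size $\ell$ with one endpoint of each edge joined to the attachment vertex. A direct computation (Lemma~\ref{lem: weighted-matching-sum}) shows that $\mu(G_{S,\ell})$ equals $(1+X^2)^{|S|(\ell-1)}$ times $\sum_{M} X^{|\usat(M)|}\,r^{|S\setminus\usat(M)|}\,s^{|S\cap\usat(M)|}$ with $r=1+X^2$ and $s=\ell+1+X^2$. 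The decisive structural fact (Fact~\ref{fact: monomials of p}) is that, after stripping the prefactor by truncated polynomial division (Lemma~\ref{lem:PolyDiv}), the $\ell$-degree of every monomial is bounded by its $X$-degree; hence the first $k{+}1$ coefficients in $X$ depend on $\ell$ only up to degree $k$, and $k{+}1$ integer values of $\ell$ suffice to recover them. Finally, the substitution $\ell\gets-(1+X^2)$ annihilates $s$ and with it every matching having a defect in $S$; one more truncated division by $(1+X^2)^{|S|}$ then yields the restricted $k$-defect count. This division--substitution--division sequence is the concrete realisation of what you describe only schematically as a falling-factorial extraction.
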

It should be noted that the ``primal'' problem of counting matchings
with $k$ edges is $\sharpWone$-hard on general graphs \cite{Curticape2013,Curtican.Marx2014},
but becomes FPT on planar graphs \cite{Frick2004}. Furthermore, recall
that counting matchings with $0$ defects (that is, perfect matchings)
in general graphs is $\sharpP$-hard. See also Table~\ref{tab: complexities}
for the complexity of counting matchings in various settings.
\begin{table}
\begin{centering}
\begin{tabular}{c|c|c|}
\textbf{counting matchings} & on planar inputs & on general inputs\tabularnewline
\hline 
with $k$ edges & FPT by~\cite{Frick2004} & $\sharpWone$-complete by~\cite{Curticape2013,Curtican.Marx2014}\tabularnewline
\hline 
with $k$ defects & \textcolor{blue}{$\sharpWone$-hard by Thm.~\ref{main thm: defect-match}} & $\sharpP$-complete for $k=0$ by~\cite{Valiant1979a}\tabularnewline
\hline 
\end{tabular}
\par\end{centering}

\caption{\label{tab: complexities}Counting matchings under different parameterizations and input restrictions}
\end{table}

\subsection{\label{sub: few-faces}Few apices that also see few faces}

In Section~\ref{sec:Algorithm-for-restricted-apices}, we show that $\PerfMatch$ becomes
easier in $k$-apex graphs $G$ when the apex neighborhoods can all be covered
by $s$ faces of the underlying planar graph. This setting is motivated
by a structural decomposition theorem for graphs $G$ excluding a
fixed $1$-apex minor $H$: As shown in \cite{Demaine.Hajiaghayi2009},
based on \cite{Robertson.Seymour2003}, if $G$ excludes a fixed $1$-apex
minor $H$, then there is a constant $c_{H}\in\mathbb{N}$ such that
$G$ can be obtained by gluing together (in a formalized way) graphs
that have genus $\leq c_{H}$ after removing ``vortices'' from $\leq c_{H}$
faces and a set $A$ of $\leq c_{H}$ apex vertices, whose neighborhood
in $G-A$ is however covered by $\leq c_{H}$ faces. Our setting is
a simplification of this general situation as we forbid vortices,
gluing, and restrict the genus to $0$. We obtain an FPT-algorithm
for this restricted case:
\begin{thm}
\label{thm: apex}Given as input a graph $G$, a set $A\subseteq V(G)$
of size $k$ and a drawing of $G-A$ in the plane with $s$ distinguished faces
$F_{1},\ldots,F_{s}$
such that the neighborhood of $A$ is contained in the union of $F_{1},\ldots,F_{s}$,
we can count the perfect matchings of $G$ in time 
$2^{\mathcal{O}(2^{k}\cdot\log(k)+s)}\cdot n^{4}$.
\end{thm}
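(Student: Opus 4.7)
The plan is to reduce the problem to multiple invocations of a planar perfect-matching counting subroutine by enumerating the combinatorial ``type'' of how the apex set $A$ interacts with the rest of the graph. Every perfect matching $M$ of $G$ decomposes uniquely as $M = M_A \cup M_{\mathrm{ext}} \cup M_0$, where $M_A \subseteq E(G[A])$ is a partial matching on the apex set covering some $B \subseteq A$, $M_{\mathrm{ext}}$ matches the remaining apices $C := A \setminus B$ injectively to distinct non-apex partners $\pi(c) \in N(c) \subseteq V_F := V(F_1) \cup \dots \cup V(F_s)$, and $M_0$ is a perfect matching of $G - A - \pi(C)$. Summing over these triples recovers $\PerfMatch(G)$.

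First I would enumerate the partial matching $M_A$ on $A$; there are at most $(k-1)!! \leq 2^{\mathcal{O}(k \log k)}$ such matchings. With $M_A$ fixed, the task reduces to computing
\[
T(C) \;=\; \sum_{\substack{\pi : C \to V_F\text{ injective}\\ \pi(c) \in N(c)}} \PerfMatch\bigl(G - A - \pi(C)\bigr).
\]
Interpreting each injection $\pi$ as a matching of auxiliary vertices, I would attach a new vertex $c^{*}$ with neighborhood $N(c)$ to $G - A$ for every $c \in C$; the number of perfect matchings of the resulting graph $H$ equals $T(C)$. The obstacle is that $H$ is generally non-planar: each $c^{*}$ may have neighbors in several distinguished faces, and several $c^{*}$'s sharing a face may have interleaving neighborhoods.

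To planarise $H$, I would enumerate additional combinatorial data: (i) an assignment $\phi : C \to [s]$ sending each apex to the single face containing its partner, handled with an inclusion--exclusion correction for non-apex vertices lying on multiple face boundaries; and (ii) for each face $F_j$, a canonical certificate (a nesting tree on the face-neighborhoods of the apices placed in $F_j$) that witnesses a crossing-free embedding of the $c^{*}$'s inside $F_j$. The refined configuration data can be encoded by assigning a label from a set of size $\mathcal{O}(k)$ to each of the at most $2^k$ subsets of $C$, yielding $k^{2^k} = 2^{\mathcal{O}(2^k \log k)}$ configurations in total. For each configuration, the augmented graph is planar, and counting its perfect matchings (with every $c^{*}$ forced to be matched) can be done by a single call to the $\mathcal{O}(2^s n^3)$ planar defect-matching subroutine claimed in the abstract, used with vertex weights pinning the $c^{*}$-partners to the correct face positions. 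Multiplying the enumeration sizes by the per-call cost gives the claimed running time.

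The principal obstacle is the design and completeness of the planarity certificate in (ii): multiple apices in a single face can have arbitrary subsets of the boundary as their neighborhoods, and not every cyclic arrangement admits a planar drawing. Identifying the canonical nesting tree that parameterises exactly the planar arrangements, and coupling it with the inclusion--exclusion in (i) so that every valid $\pi$ is attributed to exactly one configuration, is the crux of the proof and is what drives the doubly exponential $2^{\mathcal{O}(2^k \log k)}$ factor. The remaining arithmetic --- combining the planar subroutine calls, the enumeration of $M_A$, and the inclusion--exclusion corrections --- is mechanical and yields the overall bound $2^{\mathcal{O}(2^k \log k + s)} \cdot n^4$.
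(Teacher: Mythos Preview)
Your decomposition into a matching $M_A$ on $G[A]$ followed by the quantity $T(C)$ is fine and matches the paper's first reduction. The gap is entirely in step~(ii), which you yourself flag as ``the crux'' but do not actually carry out. The difficulty is real: once you attach the auxiliary vertices $c^{*}$ with their full neighbourhoods inside a face, the resulting graph can be genuinely non-planar (two apices with interleaving neighbour sets on a single face boundary already force a $K_5$-minor), and there is no evident way to partition the injections $\pi$ into $f(k)$ classes each of which yields a planar instance. Your claim that the configuration data ``can be encoded by assigning a label from a set of size $\mathcal{O}(k)$ to each of the at most $2^k$ subsets of $C$'' is unsupported; to make the $c^{*}$'s planar one would have to restrict each $c^{*}$ to a contiguous arc of the face boundary, and the number of maximal monochromatic arcs in the colouring $v \mapsto N_G(v)\cap A$ of a face boundary can be $\Theta(n)$, not $f(k)$. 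So as written the enumeration does not close.

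The paper avoids this obstruction by never attempting to re-embed the apices. Instead it works entirely in the planar graph $H=G-A$: each vertex $v\in N_G(A)$ is given the formal weight $X\cdot\lambda_{N_G(v)\cap A}$ (one indeterminate $\lambda_R$ per subset $R\subseteq A$), all other vertices get weight $0$, and one evaluates $\MatchSum(H)$. Since every nonzero-weight vertex lies on one of the $s$ faces, each numerical evaluation costs $\mathcal{O}(2^{s}n^{3})$ by the face-restricted $\MatchSum$ algorithm. Multivariate interpolation over a grid of size $n\cdot (k+1)^{2^{k}}$ then recovers, for each ``type'' $t$ (a $k$-multiset of apex-neighbourhood labels), the total weight $P_t$ of $k$-defect matchings of $H$ with that type; finally $\PerfMatch(G)=\sum_t P_t\cdot\PerfMatch(S_t)$, where $S_t$ is a graph on $\le 2k$ vertices depending only on $t$. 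The $2^{\mathcal{O}(2^{k}\log k)}$ factor thus comes from the $2^{k}$ indeterminates of degree $\le k$ in the interpolation, not from any planar-embedding enumeration.
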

Note that even with $k=3$ and $s=1$, such graphs can have unbounded
genus, as witnessed by the graphs $K_{3,n}$ for $n\in\mathbb{N}$:
Each graph $K_{3,n}$ is a $3$-apex graph whose underlying planar
graph (which is an independent set) can be drawn on one single face. However, the genus of $K_{3,n}$
is known to be $\Omega(n)$ \cite{DBLP:books/daglib/0070576}.

To prove Theorem~\ref{thm: apex}, we first consider
a variant of $\pDefMatch$ where the input graph $G$ is given
as a planar drawing with $s$ distinguished faces.
The task is to count $k$-defect matchings such that
all defects are contained in the distinguished faces. This problem
is FPT, even when $k$ is not part of the parameter.
\begin{thm}
\label{thm: face-defects}Given as input a planar drawing of a graph
$G$ with $s$ distinguished faces $F_{1},\ldots,F_{s}$, the following
problem can be solved in time $\mathcal{O}(2^{s}\cdot n^{3})$: Count
the matchings in $G$ for which every defect is contained in $V(F_{1})\cup\ldots\cup V(F_{s})$.
\end{thm}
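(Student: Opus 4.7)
The plan is to express the desired quantity $Z$ (the number of matchings in $G$ whose defects all lie in $U := V(F_1)\cup\cdots\cup V(F_s)$) as a weighted sum of $2^s$ perfect matching counts on planar graphs obtained from $G$ by inserting one of two absorber gadgets into each distinguished face. For each face $F_i$ with boundary vertices $v_{i,1},\ldots,v_{i,m_i}$ in cyclic order, I will define two planar gadgets drawn inside $F_i$: the \emph{path gadget} $H_i^0$ consists of $m_i - 1$ auxiliary vertices $h_{i,1},\ldots,h_{i,m_i-1}$ arranged on a path, where each $h_{i,j}$ is additionally adjacent to the two consecutive boundary vertices $v_{i,j}$ and $v_{i,j+1}$; the \emph{cycle gadget} $H_i^1$ consists of $m_i$ auxiliary vertices $h_{i,1},\ldots,h_{i,m_i}$ arranged in a cycle, with the analogous adjacencies $h_{i,j}\sim v_{i,j}, v_{i,(j+1)\bmod m_i}$. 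Both gadgets admit a crossing-free embedding inside $F_i$, so for every parity vector $\vec{\epsilon}\in\{0,1\}^s$, the graph $G^+_{\vec{\epsilon}}$ obtained from $G$ by inserting $H_i^{\epsilon_i}$ inside $F_i$ for each $i$ remains planar.

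The heart of the argument is a signature computation. For every $D\subseteq V(F_i)$, let $\alpha_D^{\epsilon}$ denote the number of partial matchings of $H_i^{\epsilon}$ that cover every auxiliary vertex and pair precisely the boundary vertices in $D$ with auxiliary vertices. I will show that $\alpha_D^0 = 1$ if $|D|\equiv m_i - 1 \pmod{2}$ and $0$ otherwise, and $\alpha_D^1 = 2$ if $|D|\equiv m_i \pmod{2}$ and $0$ otherwise. The parity conditions follow immediately, because the $m_i - 1 - |D|$ (respectively $m_i - |D|$) unabsorbed auxiliaries must be paired off via the internal path (respectively cycle) edges, which requires an even remaining count; the more delicate claim --- that each valid $D$ admits exactly one gadget matching in the path case and exactly two in the cycle case (the latter pair of matchings corresponding to the two cyclic rotations of the residual internal matching) --- I will establish by a greedy sweep along the gadget, at each auxiliary vertex showing that its match is forced by $D$ together with the preceding choices. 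The two signatures then satisfy the key identity $\alpha_D^0 + \tfrac{1}{2}\alpha_D^1 = 1$ for every $D \subseteq V(F_i)$.

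Assuming the signature lemma and that the faces $F_1,\ldots,F_s$ are vertex-disjoint, a direct expansion of the generating polynomial over absorbed subsets together with multiplicativity across faces yields
$$
\sum_{\vec{\epsilon}\in\{0,1\}^s}\bigl(\tfrac{1}{2}\bigr)^{|\vec{\epsilon}|_1}\,\mathrm{PM}(G^+_{\vec{\epsilon}}) \;=\; \sum_{D\subseteq U}\mathrm{PM}(G - D) \;=\; Z.
$$
Each $\mathrm{PM}(G^+_{\vec{\epsilon}})$ is computable in $\mathcal{O}(n^3)$ time by applying the FKT method to the planar graph $G^+_{\vec{\epsilon}}$ (which has $\mathcal{O}(n)$ vertices), and there are $2^s$ such terms in the sum, which gives the claimed $\mathcal{O}(2^s\cdot n^3)$ bound. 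The main technical obstacle beyond the signature lemma is handling distinguished faces that share boundary vertices, since a shared defect vertex would receive gadget edges from both incident faces and hence be overcounted by a factor of $2$ per occurrence in the expansion above; I will address this either by a preprocessing step that separates the distinguished faces (for instance by locally subdividing each shared boundary arc with a constant-length buffer that preserves perfect matching counts), or by refining the gadget assignment so that each shared boundary vertex is designated to be absorbable by a unique face.
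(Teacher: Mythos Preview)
Your approach is essentially the same as the paper's: reduce to $2^{s}$ instances of $\PerfMatch$ on planar graphs by inserting, into each distinguished face, one of two gadgets whose signatures filter the parity of the number of absorbed (i.e., defective) vertices on that face; the all-ones signature is then realised as a linear combination of these two parity filters. The paper states this via the stronger Theorem~\ref{thm: MatchSum-boundedfaces} and simply cites Valiant's gadgets $D_{t}^{0},D_{t}^{1}$ as black boxes with the required even/odd signatures, whereas you give explicit path/cycle constructions and obtain signatures $(1,0)$ and $(0,2)$, which accounts for your $(1/2)^{|\vec{\epsilon}|_{1}}$ weights versus the paper's unweighted sum $\sum_{\theta}S_{\theta}$. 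Your ``option~(b)'' for shared boundary vertices---assign each shared vertex to a unique face---is exactly the paper's first step (the partition of $\bigcup_{i}V(F_{i})$ into disjoint $B_{i}\subseteq V(F_{i})$), and once you commit to it your gadget should be built on $B_{i}$ in its induced cyclic order rather than on all of $V(F_{i})$; this also makes your ``option~(a)'' preprocessing unnecessary.
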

To prove Theorem~\ref{thm: face-defects}, we implicitly use the 
technique of combined signatures \cite{Curtican.Xia2015}: Using a
linear combination of two planar gadgets from \cite{Valiant2008},
we show that counting the particular matchings needed in Theorem~\ref{thm: face-defects}
can be reduced to $2^{s}$ instances of $\PerfMatch$ in planar graphs. 
We can phrase this result in a self-contained way that does not require
the general machinery of combined signatures.
It should be noted that the case $s=1$ was already solved by Valiant \cite{Valiant2008} and that our proof of Theorem~\ref{thm: face-defects} is a rather simple generalization of his construction. In a different context, this idea is also used in \cite{DBLP:journals/corr/Curticapean15a}.

More effort is then required to prove Theorem~\ref{thm: apex}, and we do so by reduction to Theorem~\ref{thm: face-defects}.
To this end, we label each vertex in the planar graph $G-A$ with
its neighborhood in the apex set $A$. Each $k$-defect matching in $G-A$ then
has a \emph{type}, which is the $k$-element multiset of $A$-neighborhoods
of its $k$ defects.\footnote{This resembles an idea from an algorithm for counting subgraphs
of bounded vertex-cover number \cite{Curtican.Marx2014}.} We will be able to count $k$-defect matchings $M$ of any specified
type among the $(2^{k})^{k}$ possible types, and we observe that
the number of extensions from $M$ to a perfect matching in $G$ depends
only on its type. This will allow us to recover the number of perfect matchings in $G$.

\section{\label{chap:Preliminaries}Preliminaries}

For $n\in\mathbb{N}$, write $[n]=\{1,\ldots,n\}$. 
Graphs $G$ are undirected and simple. 
They are unweighted unless specified otherwise. 
We write $N_{G}(v)$ for the neighborhood of $v\in V(G)$ in $G$.

\subsection{Polynomials}

We denote the degree of a polynomial $p\in\mathbb{Q}[x]$ by $\deg(p)$.
If $\vec{x}=(x_{1},\ldots,x_{t})$ is a list of indeterminates, then
we write $\mathbb{N}^{\vec{x}}$ for the set of all monomials over
$\vec{x}.$ A \emph{multivariate polynomial} $p\in\mathbb{Q}[\vec{x}]$
is a polynomial $p=\sum_{\theta\in\mathbb{N}^{\vec{x}}}a(\theta)\cdot\theta$
with $a(\theta)\in\mathbb{Q}$ for all $\theta\in\mathbb{N}^{\vec{x}}$,
where $a$ has finite support. 
The polynomial $p$ \emph{contains} a given monomial $\theta\in\mathbb{N}^{\vec{x}}$
if $a(\theta)\neq0$ holds. If $x$ is an indeterminate from $\vec{x}$,
then we write $\deg_{x}(p)$ for the \emph{degree of} $x$ in $p$.
This is the maximum number $k\in\mathbb{N}$ such that $p$ contains
a monomial $\theta$ with factor $x^{k}$. If $\vec{y}$ is a list of indeterminates, then we denote the \emph{total degree of} $\vec{y}$
in $p$ as the maximum degree of any monomial $\mathbb{N}^{\vec{y}}$
that is contained as a factor of a monomial in $p$.

Furthermore, if $p\in\mathbb{Q}[x,y]$ is a bivariate polynomial and
$\xi\in\mathbb{Q}$ is some arbitrary fixed value, we write $p(\cdot,\xi)$
for the result of the substitution $y\gets\xi$ in $p$, and we observe
that $p(\cdot,\xi)\in\mathbb{Q}[x]$. Likewise, we write $p(\xi,\cdot)$
for the result of substituting $x\gets\xi$.

\subsection{(Perfect) matching polynomials}

If $G$ is a graph, then a set $M\subseteq E(G)$ of vertex-disjoint edges
is called a matching. We write $\M[G]$ for the set of all matchings
of $G$. For $M\in\M[G]$, we write $\usat(M)$ for the set of unmatched
vertices in $M$. If $|\usat(M)|=k$ for $k\in\mathbb{N}$, we say
that $M$ is a $k$-defect matching, and we write $\DM_{k}[G]$ for
the set of $k$-defect matchings of $G$. We also write $\PM[G]=\DM_{0}[G]$
for the set of perfect matchings of $G$. 

If $G$ is an edge-weighted
graph with edge-weights $w:E(G)\to\mathbb{Q}$, then we define
\begin{equation}
\PerfMatch(G)=\sum_{M\in\PM[G]}\prod_{e\in M}w(e).\label{eq: PerfMatch}
\end{equation}
On planar graphs $G$, we can efficiently compute $\PerfMatch(G)$.
\begin{thm}
[\cite{Kasteleyn1961,Temperley.Fisher1961,Kasteleyn1967}]\label{thm: pm-planar-algo}
For planar edge-weighted graphs $G$, the value $\PerfMatch(G)$ can
be computed in time $\mathcal{O}(n^{3})$.
\end{thm}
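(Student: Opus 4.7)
The plan is to implement the classical FKT (Fisher–Kasteleyn–Temperley) scheme: reduce the weighted sum $\PerfMatch(G)$ to the Pfaffian of a signed skew-symmetric matrix, which can be evaluated via a determinant. Concretely, to every orientation of $G$ I would associate the skew-symmetric matrix $A\in\mathbb{Q}^{V\times V}$ with $A_{uv}=w(\{u,v\})$ if $u\to v$, $A_{uv}=-w(\{u,v\})$ if $v\to u$, and $A_{uv}=0$ otherwise. The Pfaffian
\[
\mathrm{Pf}(A)=\sum_{M\in \PM[G]}\mathrm{sgn}(M)\prod_{e\in M}w(e)
\]
is then a signed variant of $\PerfMatch(G)$, where $\mathrm{sgn}(M)\in\{\pm1\}$ is a sign depending on the chosen orientation and on the ordering of vertices.

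The key step is to arrange that all signs $\mathrm{sgn}(M)$ coincide, so that $|\mathrm{Pf}(A)|=\PerfMatch(G)$. I would use the standard criterion: an orientation is \emph{Pfaffian} if for every pair $M,M'\in\PM[G]$, the symmetric difference $M\triangle M'$ (a disjoint union of even cycles) contains, along each of its cycles, an odd number of edges oriented in each cyclic direction. One shows this is implied by the face condition: for every bounded face $F$ in a plane embedding of $G$, the number of edges of $F$ oriented clockwise is odd. Using the Euler-formula-style counting argument over the faces enclosed by a cycle $C$, one verifies that this local face condition propagates to every even cycle of $G$, and in particular to every component of every symmetric difference.

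The remaining task is to \emph{construct} such an orientation for a plane graph $G$ in polynomial time. I would proceed by induction over the number of bounded faces: start with a spanning tree $T$ of $G$ and orient its edges arbitrarily; then repeatedly pick a non-tree edge $e$ whose addition to the currently oriented subgraph closes a new bounded face, and choose the orientation of $e$ so that the face condition holds on that face. Since each non-tree edge closes exactly one new face in a suitable order (e.g., a peeling of the plane graph), this gives a Pfaffian orientation, computable in $\mathcal{O}(n^2)$ time.

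With a Pfaffian orientation in hand, I would finish by using the identity $\mathrm{Pf}(A)^2=\det(A)$ for skew-symmetric $A$, so $\PerfMatch(G)=\sqrt{\det(A)}$ (with the nonnegative square root chosen). Since the determinant of an $n\times n$ rational matrix is computable in $\mathcal{O}(n^3)$ time by Gaussian elimination, the total running time is $\mathcal{O}(n^3)$. The main obstacle in this plan is the verification that the face condition implies the cycle condition for arbitrary even cycles $C$ of $G$; the clean way to handle this is the parity argument that writes the number of clockwise edges of $C$ as a sum over the bounded faces it encloses, plus twice the number of internal edges, and concludes that oddness is preserved.
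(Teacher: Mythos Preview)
The paper does not prove this theorem; it is stated with citations to \cite{Kasteleyn1961,Temperley.Fisher1961,Kasteleyn1967} as a classical black box and used without further argument. Your sketch is a faithful outline of the FKT method those references contain, so there is no discrepancy in approach to discuss.

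One small technical point in your write-up: since the theorem is stated for arbitrary rational edge weights, $\PerfMatch(G)$ need not be nonnegative, so ``$\PerfMatch(G)=\sqrt{\det(A)}$ with the nonnegative square root'' is not quite right. What you actually get from a Pfaffian orientation is $\mathrm{Pf}(A)=\epsilon\cdot\PerfMatch(G)$ for a global sign $\epsilon\in\{\pm1\}$ determined by the vertex ordering; you should either compute the Pfaffian directly (also $\mathcal{O}(n^3)$) or fix $\epsilon$ by other means, rather than pass through $\det(A)$ and a square root. This is a cosmetic fix and does not affect the running time or the overall plan.
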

If $G$ is a vertex-weighted graph with vertex-weights $w:V(G)\to\mathbb{Q},$
we define
\begin{equation}
\MatchSum(G)=\sum_{M\in\M[G]}\prod_{v\in\usat(M)}w(v).\label{eq: MatchSum}
\end{equation}
Both $\PerfMatch$ and $\MatchSum$ are also used in \cite{Valiant2008}. 
Note that zero-weights have different semantics in the two expressions: 
A vertex $v\in V(G)$ with $w(v)=0$ is required to
be matched in all matchings $M\in\M[G]$ that contribute a non-zero
term to $\MatchSum$. An edge $e\in E(G)$ with $w(e)=0$ can simply
be deleted from $G$ without affecting $\PerfMatch(G)$.

Finally, if $X$ is a formal indeterminate, we define the defect-generating
matching polynomial of unweighted graphs $G$ as 
\begin{equation}
\mu(G):=\sum_{M\in\M[G]}X^{|\usat(M)|}=\sum_{k=0}^{n}\#\DM_{k}[G]\cdot X^{k}.\label{eq: MatchGenerating}
\end{equation}
Note that $\mu(G)=\MatchSum(G')$ when $G'$ is obtained from $G$
by assigning weight $X$ to every vertex of $G$. In this paper, we
will be interested in the first $k$ coefficients of $\mu(G)$.
\begin{rem}
It is known \cite{Cai.Lu2009} that for every fixed $\xi\in\mathbb{Q}\setminus\{0\}$,
the problem of evaluating $\mu(G;\xi)$ on input $G$ is $\sharpP$-complete, even
on planar bipartite graphs $G$ of maximum degree $3$.
Note that the evaluation $\mu(G;0)$ counts the perfect matchings
of $G$.
\end{rem}

\subsection{Techniques from parameterized counting}

Please consider Section~\ref{sub:Parameterized-counting-problems}
for an introduction to parameterized counting complexity, and \cite{Flum.Grohe2004}
for a more formal treatment. We write $\leqFptT$ for parameterized (Turing) reductions between problems (as introduced in Section~\ref{sub:Parameterized-counting-problems}). Furthermore, we write $\leqFptLin$ for such parameterized reductions that incur only linear parameter blowup, i.e., on instances $x$ with parameter $k$, they only issue queries with parameter $\mathcal{O}(k)$.

Given a universe $\Omega$ and several ``bad'' subsets of $\Omega$,
the inclusion-exclusion principle allows us to count those elements
of $\Omega$ that avoid all bad subsets, provided that we know the
sizes of intersections of bad subsets.
\begin{lem}
\label{lem: incl-excl}Let $\Omega$ be a set and let $A_{1},\ldots,A_{t}\subseteq\Omega$.
For $\emptyset\subset S\subseteq[t]$, let $A_{S}:=\bigcap_{i\in S}A_{i}$
and define $A_{\emptyset}:=\Omega$. Then we have 
\[
\left|\Omega\setminus\bigcup_{i\in[t]}A_{i}\right|=\sum_{S\subseteq[t]}(-1)^{|S|}\left|A_{S}\right|.
\]
\end{lem}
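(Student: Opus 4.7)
The plan is to prove the identity by a pointwise double-counting argument: I will show that each element $x \in \Omega$ contributes the same amount to the left-hand side as it does to the right-hand side, and then sum over $x$. This is the standard textbook proof of inclusion–exclusion, but the slightly nonstandard convention $A_\emptyset := \Omega$ (so that the $S = \emptyset$ term on the right is $+|\Omega|$) is exactly what makes the identity clean, so I would spend one sentence checking that this convention lines up with the indexing.

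The first case will be an element $x$ lying in $\Omega \setminus \bigcup_i A_i$. Such an $x$ contributes $1$ to the left-hand side. On the right, the term $S = \emptyset$ gives $(-1)^0 \cdot [x \in \Omega] = 1$, while every term with $S \neq \emptyset$ gives $0$, because $x$ misses every $A_i$ and hence every intersection $A_S$. So the contributions agree.

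The second case is the interesting one: let $T := \{i \in [t] : x \in A_i\}$ and assume $T \neq \emptyset$. Then $x$ contributes $0$ to the left-hand side, while on the right we have $x \in A_S$ iff $S \subseteq T$ (using $A_\emptyset = \Omega$ for $S = \emptyset$). Thus $x$ contributes
\[
\sum_{S \subseteq T} (-1)^{|S|} \;=\; \sum_{k=0}^{|T|} \binom{|T|}{k} (-1)^k \;=\; (1 + (-1))^{|T|} \;=\; 0,
\]
where the last equality uses $|T| \geq 1$. So again the two sides match pointwise. Summing the contributions over all $x \in \Omega$ yields the identity.

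The only step that needs any care at all is the sign-cancellation computation in the second case, and even this is immediate from the binomial theorem once one observes $T \neq \emptyset$; there is no genuine obstacle here. The lemma is essentially a bookkeeping statement, and I would keep the write-up short, noting that the convention $A_\emptyset = \Omega$ is what absorbs the usual ``$|\Omega| - \sum |A_i| + \cdots$'' form into the single uniform sum on the right.
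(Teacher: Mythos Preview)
Your proof is correct and is the standard textbook argument for inclusion--exclusion. The paper itself does not give a proof of this lemma at all: it is stated as a well-known preliminary and used as a black box, so there is nothing to compare against beyond noting that your write-up supplies exactly the routine verification the paper omits.
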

In applications of Lemma~\ref{lem: incl-excl}, the left-hand side
of the equation corresponds to a quantity we wish to determine,
while the numbers $\left|A_{S}\right|$ for $S\subseteq[t]$ are computed by oracle calls.

We will also generously use the technique of polynomial interpolation: if a univariate polynomial $p$
has degree $n$ and we can evaluate $p(\xi)$ at $n+1$ distinct values
$\xi$, then we can recover the coefficients of $p$. This can be
generalized to multivariate polynomials: If $p$ has $n$ variables,
all of maximum degree $d$, and we are given sets $\Xi_{1},\ldots,\Xi_{n}$,
all of size $d+1$, along with evaluations of $p(\xi)$ on all grid
points $\xi\in\Xi_{1}\times\ldots\times\Xi_{n}$, then we can determine
the coefficients of $p$ in time $\mathcal{O}((d+1)^{3n})$.
\begin{lem}[\cite{DBLP:conf/icalp/Curticapean15}]
\label{lem: multivar-interpolation}Let $p\in\mathbb{Z}[x_{1},\ldots,x_{n}]$
be a multivariate polynomial, and for $i\in[n]$, let the degree of
$x_{i}$ in $p$ be bounded by $d_{i}\in\mathbb{N}$. Let $\Xi=\Xi_{1}\times\ldots\times\Xi_{n}\subseteq\mathbb{Q}^{n}$
with\textup{ $|\Xi_{i}|=d_{i}+1$ for all $i\in[n]$.} Then we can
compute the coefficients of $p$ with $\mathcal{O}(|\Xi|^{3})$ arithmetic
operations when given as input the set $\{(\xi,p(\xi))\mid\xi\in\Xi\}$.\end{lem}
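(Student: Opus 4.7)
My plan is to proceed by induction on the number $n$ of variables, reducing multivariate interpolation on a tensor-product grid to a cascade of univariate interpolations along successive axes.

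For the base case $n=1$, the polynomial $p(x_{1})\in\mathbb{Q}[x_{1}]$ has degree at most $d_{1}$ and is evaluated at the $d_{1}+1$ distinct points of $\Xi_{1}$. Its coefficients are then the solution of the $(d_{1}+1)\times(d_{1}+1)$ Vandermonde linear system whose right-hand side is the given evaluation vector; since the nodes are pairwise distinct the system is nonsingular, and Gaussian elimination recovers the coefficients in $\mathcal{O}((d_{1}+1)^{3})=\mathcal{O}(|\Xi|^{3})$ arithmetic operations.

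For the inductive step I would view $p$ as a polynomial in $x_{n}$ with coefficients in $\mathbb{Q}[x_{1},\ldots,x_{n-1}]$ by writing
\[
p(x_{1},\ldots,x_{n})=\sum_{j=0}^{d_{n}}q_{j}(x_{1},\ldots,x_{n-1})\cdot x_{n}^{\,j},
\]
where each $q_{j}$ has degree at most $d_{i}$ in $x_{i}$ for $i<n$. Let $\Xi':=\Xi_{1}\times\cdots\times\Xi_{n-1}$. For every fixed $\boldsymbol{\xi}'\in\Xi'$, the univariate polynomial $p(\boldsymbol{\xi}',\cdot)$ has degree at most $d_{n}$ and is known at the $d_{n}+1$ distinct points of $\Xi_{n}$, so the base case recovers the values $q_{0}(\boldsymbol{\xi}'),\ldots,q_{d_{n}}(\boldsymbol{\xi}')$ in $\mathcal{O}((d_{n}+1)^{3})$ operations. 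Running this for every $\boldsymbol{\xi}'\in\Xi'$ yields full grid evaluations of each $q_{j}$ on $\Xi'$ at combined cost $\mathcal{O}(|\Xi'|\cdot(d_{n}+1)^{3})=\mathcal{O}(|\Xi|(d_{n}+1)^{2})$. I would then invoke the inductive hypothesis on each of the $d_{n}+1$ polynomials $q_{j}$ to obtain its coefficients in $\mathcal{O}(|\Xi'|^{3})=\mathcal{O}(|\Xi|^{3}/(d_{n}+1)^{3})$ operations per call. Summing the two contributions gives a total of $\mathcal{O}(|\Xi|(d_{n}+1)^{2}+|\Xi|^{3}/(d_{n}+1)^{2})=\mathcal{O}(|\Xi|^{3})$, which closes the induction.

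The one place requiring care is the bookkeeping of arithmetic operations: a naive accounting risks a spurious factor of $n$ from the recursion depth, which must be avoided. A conceptually cleaner alternative that sidesteps the induction altogether is the Kronecker-product observation: the linear map from the coefficient vector of $p$ to its evaluation vector on the grid $\Xi$ is represented by the $|\Xi|\times|\Xi|$ matrix $V_{1}\otimes\cdots\otimes V_{n}$, where $V_{i}$ is the Vandermonde matrix on $\Xi_{i}$. Since $|\Xi_{i}|=d_{i}+1$ with distinct nodes, each $V_{i}$ is nonsingular and so is the Kronecker product, and Gaussian elimination on this single system uses $\mathcal{O}(|\Xi|^{3})$ arithmetic operations. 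The underlying mathematics is completely standard; the only genuine choice is which of the two presentations to write up.
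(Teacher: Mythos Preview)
The paper does not actually prove this lemma: it is stated in the preliminaries with a citation to \cite{DBLP:conf/icalp/Curticapean15} and no proof is given. So there is no ``paper's own proof'' to compare against, and your proposal is filling in a result the authors treat as black-box background.

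Your argument is correct. The inductive reduction to univariate interpolation along the last axis is the standard approach, and your cost analysis is right up to the caveat you yourself flag: if several $d_i=0$ (so $|\Xi_i|=1$), the recursion picks up an additive $\mathcal{O}(|\Xi'|)$ per trivial level and a naive induction constant blows up by a factor of $n$. This is easily patched by discarding variables with $d_i=0$ up front, or by the more careful constant-tracking you allude to.

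The Kronecker-product formulation you give at the end is the cleaner write-up and avoids the bookkeeping entirely: the evaluation map on the tensor grid is exactly $V_1\otimes\cdots\otimes V_n$, each $V_i$ is a nonsingular Vandermonde matrix since the nodes in $\Xi_i$ are distinct, hence the product is nonsingular, and a single $|\Xi|\times|\Xi|$ linear solve gives the bound directly. If you are writing this up, I would lead with that version and mention the axis-by-axis reduction only as motivation.
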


\section{\label{sec: Planar-k-defect-matchings}Hardness of $\protect\pDefMatch$}

We now prove Theorem~\ref{main thm: defect-match}:
Given a \emph{planar} graph $G$ and $k\in\mathbb{N}$,
it is $\sharpWone$-hard to count the $k$-defect matchings of $G$.
This amounts to computing the coefficient of $X^{k}$ in the matching-defect
polynomial $\mu(G)$. We start from the $\sharpWone$-hardness for
the following problem $\pApexPerfMatch$, which follows from Theorem~1.2 and Remark~5.6 in \cite{Curtican.Xia2015}:
\begin{thm}[\cite{Curtican.Xia2015}]
\label{thm: ApexPerfMatch is hard}The following problem $\pApexPerfMatch$
is $\sharpWone$-hard: Compute the value of $\PerfMatch(G)$, when given as input an unweighted graph $G$ and
an independent set $A\subseteq V(G)$ of size $k$ such that $G-A$ is planar and
each vertex $v\in V(G)\setminus A$ satisfies $|N_{G}(v)\cap A|\leq1$. 
The parameter in this problem is $k$. Furthermore, assuming $\sharpETH$, the problem cannot be solved in time $n^{o(k / \log k)}$.
\end{thm}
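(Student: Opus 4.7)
The plan is to prove the hardness by a parameterized Turing reduction from $\pClique$, which is $\sharpWone$-complete and, by Chen and Chor \cite{Chen.Chor2005}, admits an $n^{o(k)}$ lower bound under $\sharpETH$. I would aim for a reduction that introduces at most an $\mathcal{O}(\log k)$ factor in the parameter, as this automatically yields the claimed $n^{o(k/\log k)}$ bound for $\pApexPerfMatch$.

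First I would replace $\pClique$ by its multi-colored variant, where $V(H)$ is partitioned into $k$ parts $V_1,\ldots,V_k$ and we count tuples $(v_1, \ldots, v_k)\in V_1\times\cdots\times V_k$ forming a clique in $H$; a standard color-coding step using an explicit perfect hash family achieves this at a cost of the $\mathcal{O}(\log k)$ parameter blowup mentioned above. The $\pApexPerfMatch$ instance $(G,A)$ is then built so that each part $V_i$ is represented by one apex $a_i$ together with a planar ``candidate set'' $B_i=\{b_{i,v}:v\in V_i\}$, and $a_i$ is connected \emph{only} to $B_i$ by simple edges. This precisely respects the two structural restrictions of $\pApexPerfMatch$: $A$ is independent, and each non-apex vertex $b_{i,v}$ has exactly one apex neighbor. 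In any perfect matching of $G$ each apex $a_i$ must be matched to some $b_{i,v_i}\in B_i$, which we interpret as selecting the vertex $v_i$ for the $i$-th clique slot.

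The substantial work is to plant planar gadgets in $G-A$ that, when combined with this selection mechanism, count exactly (or in a linearly recoverable way) those tuples $(v_1,\ldots,v_k)$ that are actually $k$-cliques of $H$. The obstacle is that there are $\binom{k}{2}$ pairwise edge constraints between the $B_i$'s, and directly installing these as a dense ``edge constraint'' bipartite gadget between every pair $B_i,B_j$ immediately destroys planarity once all pairs are superposed. To bypass this, I would invoke the planar matchgates of Valiant \cite{Valiant2008} together with the \emph{combined signatures} technique mentioned in the excerpt and developed in \cite{Curtican.Xia2015}: any desired bipartite edge-constraint matchgate is written as an explicit linear combination of planar matchgate signatures, so each would-be crossing is replaced by a planar gadget at the cost of producing a polynomial-size family of planar graphs whose values $\PerfMatch(\cdot)$ combine linearly to give $\alpha\cdot\#\mathrm{clique}_k(H)$ for an explicit nonzero~$\alpha$.

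Once this planar-gadget machinery is in place, the counting step is routine: a careful inspection of perfect matchings of the constructed graph shows that the contribution of each selection $(v_1,\ldots,v_k)$ vanishes exactly when the selected tuple is not a clique and equals a fixed nonzero constant otherwise, so one oracle call per member of the planar family suffices to read off $\#\mathrm{clique}_k(H)$. The main obstacle in the whole proof is the design and verification of the planar gadget encoding the $\binom{k}{2}$ edge constraints via combined signatures; once this is done, a straightforward bookkeeping of the $\mathcal{O}(\log k)$ parameter blowup combined with the $n^{o(k)}$ lower bound for $\pClique$ from \cite{Chen.Chor2005} yields the $n^{o(k/\log k)}$ lower bound for $\pApexPerfMatch$ under $\sharpETH$.
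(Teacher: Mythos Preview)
The paper does not prove this theorem. Theorem~\ref{thm: ApexPerfMatch is hard} is quoted as a black-box result from~\cite{Curtican.Xia2015} (attributed to Theorem~1.2 and Remark~5.6 there), and the present paper uses it only as the starting point of the reduction chain~(\ref{eq:reduction-chain}). There is no proof in this paper to compare your proposal against.

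On your sketch itself: the high-level architecture---one apex per color class, each apex adjacent only to its own candidate block, and planar matchgates together with combined signatures to encode the $\binom{k}{2}$ pairwise edge constraints---is indeed the overall shape of the construction in~\cite{Curtican.Xia2015}. One point to flag: your attribution of the $\mathcal{O}(\log k)$ parameter blowup to the passage from $\pClique$ to multicolored $\pClique$ is not right; that step leaves the parameter at~$k$ (one still counts $k$-cliques, merely color-respecting ones). The $\log k$ loss enters elsewhere in the construction of~\cite{Curtican.Xia2015}. More importantly, the gadget design that makes all $\binom{k}{2}$ constraints coexist in a planar base graph via combined signatures is the entire technical content of that hardness proof, and your proposal passes over it in a sentence; filling that in is where essentially all of the work lies.
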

In the proof of Theorem~\ref{main thm: defect-match}, we introduce
an intermediate problem $\pRestrDefMatch$:
\begin{problem}
The problem $\pRestrDefMatch$ is defined as follows: Given as input
a triple $(G,S,k)$ where $G$ is a planar graph, $S\subseteq V(G)$ is a set of vertices,
and $k\in\mathbb{N}$ is an integer, count those $k$-defect matchings of $G$ whose
defects all avoid $S$, i.e., those $k$-defect matchings $M$ with
$S\cap\usat(M)=\emptyset$. The parameter is $k$.
\end{problem}
The problem $\pRestrDefMatch$ is equivalent (up to multiplication by a simple factor) to the problem $\pApexPerfMatch$
on graphs $G$ whose apices $A$ are all adjacent to a common subset
$S$ of the planar graph $G-A$, and to no other vertices. Our overall
reduction then proceeds along the chain
\begin{equation}
\pApexPerfMatch\leqFptLin\pRestrDefMatch\leqFptLin\pDefMatch.\label{eq:reduction-chain}
\end{equation}

\subsection{From $\pApexPerfMatch$ to $\pRestrDefMatch$}
The first reduction in (\ref{eq:reduction-chain}) follows from an application of the inclusion-exclusion principle.
\begin{lem}
\label{lem: hardness of restricted dual matchings}We have $\pApexPerfMatch\leqFptLin\pRestrDefMatch$.
\end{lem}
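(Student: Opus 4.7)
The plan is to reduce $\pApexPerfMatch$ to $\pRestrDefMatch$ by analyzing the structure that the independent apex set $A$ imposes on perfect matchings of $G$, then recovering the count by inclusion–exclusion.

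Given an instance $(G,A)$ of $\pApexPerfMatch$ with $A=\{a_1,\ldots,a_k\}$, I would set $H := G-A$ and $N_i := N_G(a_i) \cap V(H)$ for each $i \in [k]$. Since $A$ is independent and every non-apex vertex has at most one neighbor in $A$, the sets $N_1,\ldots,N_k$ are pairwise disjoint subsets of $V(H)$. The central observation is that in any perfect matching $M'$ of $G$, each apex $a_i$ must be matched to some vertex $v_i \in N_i$, and deleting the edges incident to $A$ leaves a perfect matching of $H - \{v_1,\ldots,v_k\}$, i.e., a $k$-defect matching of $H$ whose defect set hits each $N_i$ in exactly one vertex. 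Conversely, every such $k$-defect matching extends uniquely to a perfect matching of $G$. Hence
\[
\PerfMatch(G) = \bigl|\{M \in \DM_k[H] : |\usat(M) \cap N_i| = 1 \text{ for every } i \in [k]\}\bigr|.
\]

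Next, I would apply Lemma~\ref{lem: incl-excl} to extract this count from quantities computable by $\pRestrDefMatch$. Let $\Omega := \{M \in \DM_k[H] : \usat(M) \subseteq N_1 \cup \cdots \cup N_k\}$, and for each $i \in [k]$ let $B_i := \{M \in \Omega : \usat(M) \cap N_i = \emptyset\}$. Because $|\usat(M)| = k$ and the $N_i$ are disjoint, the desired count equals $|\Omega \setminus \bigcup_{i} B_i|$. For any $T \subseteq [k]$, the intersection $\bigcap_{i \in T} B_i$ consists precisely of those $k$-defect matchings of $H$ with defects contained in $\bigcup_{i \notin T} N_i$, so its size equals the answer returned by $\pRestrDefMatch$ on input $(H,\, V(H)\setminus\bigcup_{i\notin T} N_i,\, k)$. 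Inclusion–exclusion then yields
\[
\PerfMatch(G) = \sum_{T \subseteq [k]} (-1)^{|T|} \bigl|\bigcap_{i \in T} B_i\bigr|.
\]

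Finally, I would verify the reduction is indeed an $\leqFptLin$-reduction: it issues $2^k$ oracle calls, each on the planar graph $H$ and each carrying parameter exactly $k$, so the parameter blowup is linear (in fact, constant). The time spent outside the oracle calls is $\mathcal{O}(2^k \cdot \mathrm{poly}(n))$, which is FPT in $k$. The main conceptual point of the proof is the structural observation reducing perfect matchings in $G$ to $k$-defect matchings of $H$ with one defect per $N_i$; the remaining inclusion–exclusion step is a routine combinatorial manipulation, and I do not anticipate any substantial obstacle.
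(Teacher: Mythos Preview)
Your proposal is correct and matches the paper's proof essentially line for line: the same bijection between $\PM[G]$ and ``colorful'' $k$-defect matchings of $H$, followed by the same inclusion--exclusion over subsets of $[k]$ with $2^k$ oracle calls of parameter $k$. The only cosmetic difference is that you take the universe $\Omega$ to be those $M\in\DM_k[H]$ with $\usat(M)\subseteq\bigcup_i N_i$ (so every oracle call also forbids the apex-free vertices $W$), whereas the paper takes $\Omega=\DM_k[H]$ and lets the pigeonhole argument at the end kill defects in $W$; both formulations are equivalent.
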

\begin{proof}[Proof of Lemma~\ref{lem: hardness of restricted dual matchings}]
We reduce from $\pApexPerfMatch$ and wish to count perfect matchings
in an unweighted graph $G$ with apex set $A=\{a_{1},\ldots,a_{k}\}$
and planar base graph $H=G-A$. Note that $A$ is part of
the input, and it is an independent set. Furthermore, by definition
of $\pApexPerfMatch$, the set $V(H)$ admits a partition into $V_{1}\cup\ldots\cup V_{k}\cup W$
such that all vertices $v\in V_{i}$ for $i\in[k]$ are adjacent to
the apex $a_{i}$ and to no other apices, while no vertex $v\in W$
is adjacent to any apex.
In other words, each vertex $v\in V(H)$ can be colored by its unique
adjacent apex, or by a neutral color if $v\in W$. 

Recall that $\DM_{k}[H]$
denotes the set of $k$-defect matchings in $H$. We call a $k$-defect
matching $M\in\DM_{k}[H]$ \emph{colorful} if $|\usat(M)\cap V_{i}|=1$
holds for all $i\in[k]$, and we write $\mathcal{C}$ for the set
of all such $M$. Note that $\usat(M)\cap W=\emptyset$ for $M\in\mathcal{C}$,
since none of its $k$ defects are left over for $W$.

We claim that $\PM[G]\simeq\mathcal{C}$: If $M\in\PM[G]$, then $N=M-A$
satisfies $N\in\mathcal{C}$. Conversely, every $N\in\mathcal{C}$
can be extended to a unique $M\in\PM[G]$ by matching the unique $i$-colored
defect to its unique adjacent apex $a_{i}$.

Given oracle access to $\pRestrDefMatch$, we can determine $\#\mathcal{C}$
by the inclusion-exclusion principle from Lemma~\ref{lem: incl-excl}:
For $i\in[k]$, let $\mathcal{A}_{i}$ denote the set of those $M\in\DM_{k}[H]$
whose defects avoid color $i$, i.e., they satisfy $\usat(H,M)\cap V_{i}=\emptyset$.
Then 
\[
\mathcal{C}=\DM_{k}[H]\setminus\bigcup_{i\in[k]}\mathcal{A}_{i}.
\]
For $S\subseteq[k]$, write $\mathcal{A}_{S}=\bigcap_{i\in S}\mathcal{A}_{i}$.
We can compute $\#\mathcal{A}_{S}$ by an oracle call
to $\pRestrDefMatch$ on the instance $(H,\bigcup_{i\in S}V_{i},k)$,
so we can compute $\#\mathcal{C}=\#\PM[G]$ via inclusion-exclusion
(Lemma~\ref{lem: incl-excl}) and $2^{k}$ oracle calls to $\pRestrDefMatch$.
\end{proof}

\subsection{From $\pRestrDefMatch$ to $\pDefMatch$}
For the second reduction in (\ref{eq:reduction-chain}), we wish to
solve instances $(G,S,k)$ to $\pRestrDefMatch$ when given only an
oracle for counting $k$-defect matchings in planar graphs, \emph{without}
the ability of specifying the set $S$. Let $G$, $S$ and $k$ be
fixed in the following. Our reduction involves manipulations on polynomials,
such as a truncated version of polynomial division:
\begin{lem}
\label{lem:PolyDiv}Let $X$ be an indeterminate, and let $p,q\in\mathbb{Z}[X]$
be polynomials $p=\sum_{i=0}^{m}b_{i}X^{i}$ and $q=\sum_{i=0}^{n}a_{i}X^{i}$
with $a_{0}\neq0$. For all $t\in\mathbb{N}$, we can compute $b_{0},\ldots,b_{t}$
with $\mathcal{O}(t^{2})$ arithmetic operations from $a_{0},\ldots,a_{t}$
and the first $t+1$ coefficients of the product $pq$.
\end{lem}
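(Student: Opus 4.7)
The plan is to perform truncated polynomial long division by matching coefficients of $pq$ one at a time, solving for $b_0, b_1, \ldots, b_t$ in order. The key algebraic identity is the convolution formula for coefficients of a product: if we write $pq = \sum_{i \geq 0} c_i X^i$, then for every $k \in \mathbb{N}$,
\[
c_k \;=\; \sum_{i=0}^{k} a_i b_{k-i} \;=\; a_0 b_k \;+\; \sum_{i=1}^{k} a_i b_{k-i},
\]
where we adopt the convention $a_i = 0$ for $i > n$ and $b_j = 0$ for $j > m$. Since $a_0 \neq 0$, this identity can be rearranged to
\[
b_k \;=\; \frac{1}{a_0}\left(c_k - \sum_{i=1}^{k} a_i b_{k-i}\right),
\]
which expresses $b_k$ in terms of $c_k$, the coefficients $a_1, \ldots, a_k$, and the previously computed $b_0, \ldots, b_{k-1}$.

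The algorithm is then straightforward: starting from $b_0 = c_0 / a_0$, iterate $k = 1, 2, \ldots, t$ and use the displayed recursion to compute $b_k$. At step $k$ we perform $k$ multiplications, $k$ additions/subtractions, and one division, i.e., $\mathcal{O}(k)$ arithmetic operations; summing over $k \leq t$ yields the claimed $\mathcal{O}(t^2)$ total. Correctness of each step follows by induction on $k$ from the convolution identity, and observe that the recursion uses only the inputs we are given, namely $a_0, \ldots, a_t$ and $c_0, \ldots, c_t$ (values $a_i$ with $i > n$, should they appear because $k > n$, are zero and contribute nothing; values $b_j$ with $j > m$ are likewise zero, which the recursion will automatically produce since a consistent factorization exists).

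There is essentially no obstacle beyond verifying the recursion. The only subtlety worth flagging is the hypothesis $a_0 \neq 0$, which is precisely what lets us divide at each step; without it, the leading coefficient of $q$ would be consumed by $X$-factors and the first $t+1$ coefficients of $pq$ would not determine $b_0, \ldots, b_t$. This mirrors standard power-series inversion and is the reason $p$ can be recovered from $pq$ modulo $X^{t+1}$ whenever $q$ is a unit in $\mathbb{Z}[[X]]$.
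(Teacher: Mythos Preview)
Your proof is correct and is essentially the same as the paper's: both use the convolution identity $c_k=\sum_{i=0}^{k}a_ib_{k-i}$ and the fact that $a_0\neq 0$ to recover $b_0,\ldots,b_t$ in $\mathcal{O}(t^2)$ operations. The paper phrases this as solving a lower-triangular linear system by forward substitution, while you write out the resulting recursion explicitly, but these are the same computation.
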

\begin{proof}
Let $c_{0},\ldots,c_{n+m}$ enumerate the coefficients of the product
$pq$. By elementary algebra, we have $c_{i}=\sum_{\kappa=0}^{i}a_{\kappa}b_{i-\kappa}$,
which implies the linear system 
\begin{equation}
\left(\begin{array}{ccc}
a_{0}\\
\vdots & \ddots\\
a_{t} & \ldots & a_{0}
\end{array}\right)\left(\begin{array}{c}
b_{0}\\
\vdots\\
b_{t}
\end{array}\right)=\left(\begin{array}{c}
c_{0}\\
\vdots\\
c_{t}
\end{array}\right).\label{M:eq: Linear system of polynomial coefficients}
\end{equation}
As this system is triangular with $a_{0}\neq0$ on its main diagonal,
it has full rank and can be solved uniquely for $b_{0},\ldots,b_{t}$
with $\mathcal{O}(t^{2})$ arithmetic operations.
\end{proof}

Our proof also relies upon a gadget which will allow to distinguish
$S$ from $V(G)\setminus S$.
\begin{defn}
For $\ell\in\mathbb{N}$, an $\ell$-rake $R_{\ell}$ is a matching
$M$ of size $\ell$, together with an additional vertex $w$ adjacent
to one vertex of each edge in $M$:

\begin{center}
\includegraphics[width=0.16\textwidth]{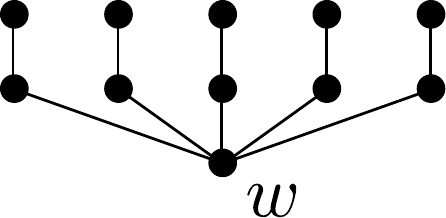}\vspace{-3mm}

\par\end{center}

Let $G_{S,\ell}$ be the graph obtained from attaching $R_{\ell}$
to each $v\in S$. This means adding a local copy of $R_{\ell}$ to
$v$ and identifying the copy of $w$ with $v$. Please note that
vertices $v\in V(G)\setminus S$ receive no attachments in $G_{S,\ell}$.
\end{defn}
It is obvious that $G_{S,\ell}$ is planar if $G$ is. Recall the
defect-generating matching polynomial $\mu$ from (\ref{eq: MatchGenerating}).
We first show that, for fixed $\ell\in\mathbb{N}$, the polynomial
$\mu(G_{S,\ell})$ can be written as a weighted sum over matchings
$M\in\M[G]$, where each $M$ is weighted by an expression that depends
on the number $|\usat(M)\cap S|$. Ultimately, we want to tweak these
weights in such a way that only matchings with $|\usat(M)\cap S|=0$
are counted.
\begin{lem}
\label{lem: weighted-matching-sum}Define polynomials $r,f_{\ell}\in\mathbb{Z}[X]$
and $s\in\mathbb{Z}[X,\ell]$ by 
\begin{eqnarray*}
r(X) & = & 1+X^{2}, \\
s(X,\ell) & = & \ell+1+X^{2}, \\
f_{\ell}(X) & = & (1+X^{2})^{|S|(\ell-1)}.
\end{eqnarray*}
Then it holds that 
\begin{equation}
\mu(G_{S,\ell},X)=f_{\ell}\cdot\sum_{M\in\M[G]}X^{|\usat(M)|}\cdot r^{|S\setminus\usat(M)|}\cdot s^{|S\cap\usat(M)|}.\label{eq: mu-weights}
\end{equation}

\end{lem}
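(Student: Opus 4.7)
The approach is to enumerate matchings $N \in \M[G_{S,\ell}]$ by first exposing their ``core'' $M := N \cap E(G)$, which is itself a matching of $G$. Because each $\ell$-rake attached at a vertex $v \in S$ meets the rest of $G_{S,\ell}$ only at $v$, the weight $X^{|\usat(N)|}$ factors across $M$ and the individual rakes, with the per-rake contribution depending solely on whether $v$ is covered by $M$ or not. This reduces the computation of $\mu(G_{S,\ell}, X)$ to a sum over $M \in \M[G]$ of products of local contributions, one per vertex of $S$.

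For the local analysis at $v \in S$, denote the rake's additional vertices by $u_1, \ldots, u_\ell$ (each adjacent to $v$) and $v_1, \ldots, v_\ell$ (the ``teeth'', with $v_i \sim u_i$ as the only other incidence). If $v$ is already matched by $M$, then at each $i$ the only choice is whether to include $\{u_i, v_i\}$ or to leave both endpoints as defects, independently, giving local contribution $(1+X^2)^\ell = r^\ell$. If $v \in \usat(M)$, I would split into two subcases: either $v$ is left unmatched in $N$, contributing $X \cdot r^\ell$, or $v$ is matched to some $u_j$ in the rake, which forces $v_j$ to become a defect while leaving the other $\ell-1$ teeth with independent $(1+X^2)$ choices; summing over the $\ell$ choices of $j$ gives $\ell \cdot X \cdot r^{\ell-1}$. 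Together the two subcases yield $X \cdot r^{\ell-1}(r + \ell) = X \cdot r^{\ell-1} \cdot s$.

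Assembling the local contributions, the total weight of a fixed $M \in \M[G]$ in $\mu(G_{S,\ell}, X)$ is
\[
X^{|\usat(M) \setminus S|} \cdot r^{\ell \cdot |S \setminus \usat(M)|} \cdot \bigl(X \cdot r^{\ell-1} \cdot s\bigr)^{|S \cap \usat(M)|}.
\]
Collecting $X$-exponents yields $X^{|\usat(M)|}$; the total $r$-exponent simplifies as
\[
\ell\,|S \setminus \usat(M)| + (\ell-1)\,|S \cap \usat(M)| \;=\; |S|(\ell-1) + |S \setminus \usat(M)|;
\]
and the $s$-exponent is $|S \cap \usat(M)|$. Factoring the $M$-independent prefactor $r^{|S|(\ell-1)} = f_\ell$ out of the sum over $M \in \M[G]$ then produces exactly the right-hand side of~(\ref{eq: mu-weights}).

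The only real difficulty is the local case analysis at a rake when $v$ is unmatched by $M$: the subcases (leave $v$ unmatched vs.\ match $v$ into the rake along one of $\ell$ edges) must be exhaustive and mutually exclusive, and the additional defect created at $v_j$ when $v$ is matched to $u_j$ must be tracked carefully so that summing over $j$ produces exactly the $X \cdot r^{\ell-1} \cdot s$ expression with $s = \ell + 1 + X^2$. Once this local calculation is in place, the rest is routine bookkeeping enabled by the independence of the rakes attached at distinct vertices of $S$.
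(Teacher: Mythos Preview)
Your proposal is correct and follows essentially the same approach as the paper: partition $\M[G_{S,\ell}]$ according to the core matching $M = N \cap E(G)$, compute the local contribution of each rake by the same two-case analysis (whether $v$ is matched by $M$ or not, and in the latter case whether $v$ is matched into the rake), and then factor out the common power $r^{|S|(\ell-1)} = f_\ell$. The only cosmetic difference is that the paper writes each local factor as $(1+X^2)^{\ell-1}$ times a residual ($r$ or $Xs$) and collects the $(1+X^2)^{\ell-1}$ factors directly into $f_\ell$, whereas you keep the full powers $r^\ell$ and $r^{\ell-1}$ and simplify the total $r$-exponent afterwards; the arithmetic is identical.
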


\begin{proof}
Every matching $M\in\mathcal{M}[G]$ induces a certain set $\mathcal{C}_{M}\subseteq\mathcal{M}[G_{S,\ell}]$
of matchings in $G_{S,\ell}$, where each matching $N\in\mathcal{C}_{M}$
consists of $M$ together with an extension by rake edges. The family
$\{\mathcal{C}_{M}\}_{M\in\M[G]}$ is easily seen to partition $\mathcal{M}[G_{S,\ell}]$,
and we obtain
\begin{equation}
\mu(G_{S,\ell},X)=\sum_{M\in\M[G]}\underbrace{\sum_{N\in\mathcal{C}_{M}}X^{|\usat(N)|}}_{=:e(M)}.\label{eq: mu-part}
\end{equation}

Every matching $N\in\mathcal{C}_{M}$ consists of $M$ and rake edges,
which are added independently at each vertex $v\in S$. Hence, the
expression $e(M)$ in (\ref{eq: mu-weights}) can be computed from
the product of the individual extensions at each $v\in S$.
To calculate the factor obtained by such an extension, we have
to distinguish whether $v$ is unmatched in $M$ or not. The possible
extensions at $v$ are also shown in Figure~\ref{fig: rake states}.
\begin{figure}
\begin{centering}
\includegraphics[width=0.75\textwidth]{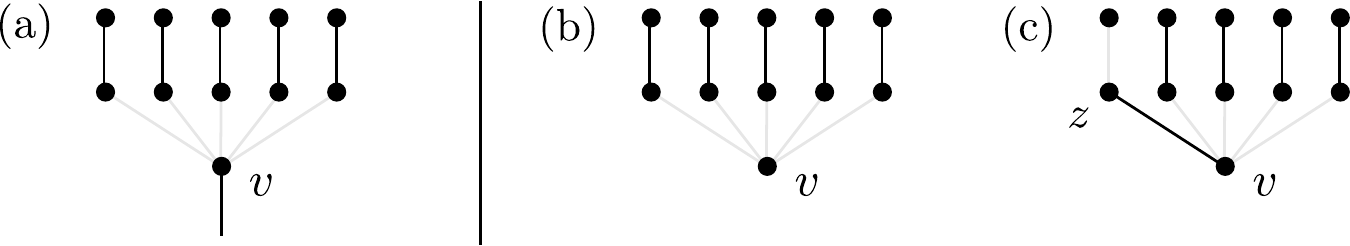}
\par\end{centering}

\caption{\label{fig: rake states}Possible types of extensions of the rake
at $v$. The left case corresponds to $v\protect\notin\protect\usat(M)$,
and the two right cases correspond to $v\in\protect\usat(M)$.}
\end{figure}

\begin{description}
\item [{$v\notin\usat(M):$}] We can extend $M$ at $v$ by any subset
of the $\ell$ rake edges not adjacent to $v$, as shown in Figure~\ref{fig: rake states}.a.
In total, these $2^{\ell}$ extensions contribute the factor $(1+X^{2})^{\ell}=(1+X^{2})^{\ell-1}r$.
\item [{$v\in\usat(M):$}] We have two choices for extending, shown in
the right part of Figure~\ref{fig: rake states}:
Firstly, we can extend as in the case $v\notin\usat(M)$, and then we obtain
the factor $X(1+X^{2})^{\ell}$. Here, the additional factor $X$
corresponds to the unmatched vertex $v$. This situation is shown
in Figure~\ref{fig: rake states}.b.
Secondly, we can match $v$ to one of its $\ell$ incident rake edges, say to
$e=vz$ for a rake vertex $z$, as in Figure~\ref{fig: rake states}.c.
Then we can choose a matching among the $\ell-1$ rake edges not incident
with $z$. This gives a factor of $\ell X(1+X^{2})^{\ell-1}$. Note
that $v$ is matched, but the vertex adjacent to $z$ is not, yielding
a factor of $X$.

In total, if $v\in\usat(M)$, we obtain the factor $X(1+X^{2})^{\ell}+\ell X(1+X^{2})^{\ell-1}=X(1+X^{2})^{\ell-1}s$. 

\end{description}

In each matching $N\in\mathcal{C}_{M}$, every unmatched vertex in
$\bar{S}=V(G)\setminus S$ contributes a factor $X$. By multiplying
the contributions of all $v\in V(G)$, we have thus shown that 
\begin{eqnarray*}
e(M) & = & f_{\ell}(X)\cdot X^{|\bar{S}\cap\usat(M)|}\cdot r^{|S\setminus\usat(M)|}\cdot(Xs)^{|S\cap\usat(M)|}\\
 & = & f_{\ell}(X)\cdot X^{|\usat(M)|}\cdot r^{|S\setminus\usat(M)|}\cdot s^{|S\cap\usat(M)|}
\end{eqnarray*}
and together with (\ref{eq: mu-part}), this proves the claim.
\end{proof}

Due to the factor $f_{\ell}$, the expression $\mu(G_{S,\ell})$ is
not a polynomial in the indeterminates $X$ and $\ell$. We define
a polynomial $p\in\mathbb{Z}[X,\ell]$ by removing this factor. 
\begin{equation}
p(X,\ell):=\sum_{M\in\M[G]}X^{|\usat(M)|}\cdot r^{|S\setminus\usat(M)|}\cdot s^{|S\cap\usat(M)|}.\label{eq:poly-p}
\end{equation}
Depending upon the concrete application, we will consider $p\in\mathbb{Z}[X,\ell]$
as a polynomial in the indeterminates $\ell$ and $X$, or as a polynomial
$p\in(\mathbb{Z}[\ell])[X]$ in the indeterminate $X$ with coefficients
from $\mathbb{Z}[\ell]$. In this last case, we write $p=\sum_{i=0}^{n}a_{i}X^{i}$
with coefficients $a_{i}\in\mathbb{Z}[\ell]$ for $i\in\mathbb{N}$
that are in turn polynomials. Then we define 
\begin{equation}
[p]_{k}:=\sum_{i=0}^{k}a_{i}X^{i}\label{eq:truncated-p}
\end{equation}
as the restriction of $p$ to its first $k+1$ coefficients. For later
use, let us observe the following simple fact about $[p]_{k}$, considered
as a polynomial $[p]_{k}\in\mathbb{Z}[X,\ell]$.
\begin{fact}
\label{fact: monomials of p}For $i,j\in\mathbb{N}$, every monomial
$\ell^{i}X^{j}$ appearing in $[p]_{k}$ satisfies $i\leq j\leq k$.\end{fact}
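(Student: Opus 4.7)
The plan is to trace how a monomial $\ell^i X^j$ can arise from the definition (\ref{eq:poly-p}), showing the inequality $i\leq j$ holds term-by-term for each matching's contribution.

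Fix a matching $M\in\M[G]$ and set $a=|\usat(M)|$, $b=|S\setminus\usat(M)|$, $c=|S\cap\usat(M)|$. The summand contributed by $M$ to $p(X,\ell)$ is
\[
X^a\cdot (1+X^2)^b\cdot (\ell+1+X^2)^c,
\]
viewed as a polynomial in $\mathbb{Z}[X,\ell]$. The factors $X^a$ and $(1+X^2)^b$ contain no $\ell$, so every power of $\ell$ appearing in this product must come from expanding $(\ell+1+X^2)^c$ by the multinomial theorem. A generic term of that expansion is (up to a constant) $\ell^{i'} X^{2k'}$ with $i'+k'\leq c$; in particular $i'\leq c$. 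On the other hand, the $X$-degree of any monomial of the $M$-term is at least $a$, since the remaining two factors have only nonnegative $X$-powers.

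Thus every monomial $\ell^i X^j$ appearing in the $M$-contribution satisfies $i\leq c$ and $j\geq a$. Since $c=|S\cap\usat(M)|\leq|\usat(M)|=a$, we obtain $i\leq c\leq a\leq j$. This bound holds for each $M$ separately, so it is preserved after summing over $M\in\M[G]$: any monomial $\ell^i X^j$ present in $p$ (after cancellation) must have appeared in at least one $M$-term, hence satisfies $i\leq j$. The truncation $[p]_k$ further restricts to $j\leq k$ by definition (\ref{eq:truncated-p}), giving $i\leq j\leq k$ as claimed.

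There is no real obstacle here; one only needs to be slightly careful to note that the term-wise bound $i\leq j$ survives summation, since cancellation across different matchings can only eliminate monomials, never create new ones with a different degree pattern.
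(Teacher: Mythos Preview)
Your proof is correct and follows essentially the same approach as the paper's: both argue that the only source of $\ell$ is the factor $s^{|S\cap\usat(M)|}$, so the $\ell$-degree is bounded by $|S\cap\usat(M)|\leq|\usat(M)|$, which in turn lower-bounds the $X$-degree, and then truncation to $[p]_k$ gives $j\leq k$. Your version is slightly more explicit (spelling out the multinomial expansion and the remark about cancellation under summation), but the argument is the same.
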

\begin{proof}
Recall $r$ and $s$ from Lemma~\ref{lem: weighted-matching-sum}.
The indeterminate $\ell$ appears in $s$ with degree $1$, but it
does not appear in $r$. In the right-hand side of (\ref{eq:poly-p}),
every term containing a factor $s^{t}$, for $t\in\mathbb{N}$, also
contains the factor $X^{t}$, because $|S\cap\usat(M)|\leq|\usat(M)|$
trivially holds. Hence, whenever $\ell^{i}X^{j}$ is a monomial in
$p$, then $i\leq j$. Since the maximum degree of $X$ in $[p]_{k}$
is $k$ by definition, the claim follows.
\end{proof}
In the next lemma, we show that knowing the coefficients of $[p]_{k}$
allows to solve the instance $(G,S,k)$ to $\pRestrDefMatch$ from
the beginning of this subsection. After that, we will show how to
compute $[p]_{k}$ with an oracle for $\pDefMatch$.
\begin{lem}
\label{lem: polynomial gives restricted dual matchings}Let $\mathcal{N}$
denote the set of (not necessarily $k$-defect) matchings in $G$
with $\usat(M)\cap S=\emptyset$. For all $k\in\mathbb{N}$, we can
compute the number of $k$-defect matchings in $\mathcal{N}$ in polynomial
time when given the coefficients of $[p]_{k}$.\end{lem}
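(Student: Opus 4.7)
The plan is to exploit the observation that $s(X,\ell)=\ell+1+X^{2}$ vanishes under the substitution $\ell\gets-1-X^{2}$, while $r(X)=1+X^{2}$ does not involve $\ell$ at all. Applying this substitution to (\ref{eq:poly-p}) kills every summand with $|S\cap\usat(M)|\geq 1$, and for every matching $M\in\mathcal{N}$ the factor $r^{|S\setminus\usat(M)|}$ becomes exactly $(1+X^{2})^{|S|}$. I would therefore establish the clean identity
\[
p(X,-1-X^{2})=(1+X^{2})^{|S|}\cdot N(X),\qquad N(X):=\sum_{M\in\mathcal{N}}X^{|\usat(M)|},
\]
so that the desired count is precisely the coefficient $[X^{k}]N(X)$, and the task reduces to recovering the first $k+1$ coefficients of $N$.

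Second, I would verify that $[p]_{k}$ already carries enough information to compute the first $k+1$ coefficients of $p(X,-1-X^{2})$. The monomials of $p$ dropped in passing to $[p]_{k}$ are exactly those $\ell^{i}X^{j}$ with $j>k$; substituting $\ell\gets-1-X^{2}$ turns such a monomial into $(-1-X^{2})^{i}X^{j}$, which is supported in degrees $j,j+2,\ldots,j+2i$, all strictly greater than $k$. Hence the coefficients $[X^{0}],\ldots,[X^{k}]$ of $p(X,-1-X^{2})$ coincide with those obtained by substituting $\ell\gets-1-X^{2}$ directly into $[p]_{k}$ and collecting terms, a routine polynomial-time computation. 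Here Fact~\ref{fact: monomials of p} is doing the real bookkeeping, since it guarantees that no ``hidden'' high-$\ell$-degree monomial of $p$ was discarded in forming $[p]_{k}$.

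Finally, I would invoke Lemma~\ref{lem:PolyDiv} with the divisor $q=(1+X^{2})^{|S|}$, whose first $k+1$ coefficients are supplied by the binomial theorem and whose constant term $1$ is nonzero. The lemma returns $[X^{0}]N,\ldots,[X^{k}]N$ in $\mathcal{O}(k^{2})$ arithmetic operations, after which we simply output $[X^{k}]N$. The main, and essentially only, non-routine step is spotting the substitution $\ell\gets-1-X^{2}$: it is tailored so that $s$ vanishes identically while $r$ remains intact, which is exactly the algebraic separation needed to isolate matchings whose defects avoid $S$. Once this is in place, the combination of Fact~\ref{fact: monomials of p} with truncated polynomial division completes the argument.
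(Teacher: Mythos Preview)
Your proposal is correct and follows essentially the same route as the paper: the substitution $\ell\gets-(1+X^{2})$ to annihilate $s$, the identity $p(X,-(1+X^{2}))=(1+X^{2})^{|S|}\cdot N(X)$, the observation that discarded monomials $\ell^{i}X^{j}$ with $j>k$ contribute only to $X$-degrees $>k$ after substitution, and then truncated division via Lemma~\ref{lem:PolyDiv}. One small remark: your appeal to Fact~\ref{fact: monomials of p} is not actually needed here (and the paper does not use it in this proof) --- your own argument that the substitution can only raise the $X$-degree already suffices to show $[b]_{k}=[q]_{k}$; Fact~\ref{fact: monomials of p} is instead used later, in Lemma~\ref{lem: dual match gives polynomial}, to bound the $\ell$-degree for interpolation purposes.
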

\begin{proof}
For ease of presentation, assume first we knew \emph{all} coefficients
of $p$ rather than only those of $[p]_{k}$. We will later show how
to solve the problem when given only $[p]_{k}$. 

Starting from $p$, we perform the substitution
\begin{equation}
\ell\gets-(1+X^{2})\label{eq:subst}
\end{equation}
to obtain a new polynomial $q\in\mathbb{Z}[X]$ from $p$. By definition
of $s$ (see Lemma~\ref{lem: weighted-matching-sum}), we have 
\begin{equation}
s(X,-(1+X^{2}))=0,\label{eq: eq s =00003D0}
\end{equation}
so every matching $M\notin\mathcal{N}$ has zero weight in $q$. To
see this, note that by (\ref{eq:poly-p}), the weight of each matching
$M\in\M[G]$ in $p$ contains a factor $s^{|S\cap\usat(M)|}$. But
due to (\ref{eq: eq s =00003D0}), the corresponding term in $q$
is non-zero only if $|S\cap\usat(M)|=0$. We obtain 
\[
q=\sum_{M\in\mathcal{N}}X^{|\usat(M)|}\cdot(1+X^{2})^{|S\setminus\usat(M)|}.
\]
Since every $M\in\mathcal{N}$ satisfies $|S\setminus\usat(M)|=|S|$,
this simplifies to 
\begin{equation}
q=(1+X^{2})^{|S|}\cdot\underbrace{\sum_{M\in\mathcal{N}}X^{|\usat(M)|}}_{=:q'}\label{eq:def-poly-q}
\end{equation}
and we can use standard polynomial division by $(1+X^{2})^{|S|}$
to obtain
\begin{equation}
q'=q/(1+X^{2})^{|S|}.\label{eq:poly-div}
\end{equation}
By (\ref{eq:def-poly-q}), for all $k\in\mathbb{N}$, the coefficient
of $X^{k}$ in $q'$ counts precisely the $k$-defect matchings in
$\mathcal{N}$. This finishes the discussion of the idealized setting
when all coefficients of $p$ are known. Recall the three steps involved:
The substitution in (\ref{eq:subst}), the polynomial division in
(\ref{eq:poly-div}), and the extraction of the coefficient $X^{k}$
from $q'$.

The full claim, when only $[p]_{k}$ rather than $p$ is given, can
be shown similarly, but some additional care has to be taken. First,
we perform the substitution (\ref{eq:subst}) on $[p]_{k}$ rather
than $p$. This results in a polynomial $b\in\mathbb{Z}[X]$, for
which we claim the following:
\begin{claim}
\label{claim: bk =00003D qk}We have $[b]_{k}=[q]_{k}$.
\end{claim}

\begin{proof}
Let $\Theta_{\leq i}$ for $i\in\mathbb{N}$ denote the set of monomials
in $p$ with degree $\leq i$ in $X$. The substitution (\ref{eq:subst})
maps every monomial $\theta$ in the indeterminates $X$ and $\ell$
to some polynomial $g_{\theta}\in\mathbb{Z}[X]$. Writing $a(\theta)\in\mathbb{Z}$
for the coefficient of $\theta$ in $p$, we obtain $q,b\in\mathbb{Z}[X]$
with 
\begin{eqnarray}
q & = & \sum_{\theta\in\Theta_{\leq n}}a(\theta)\cdot g_{\theta},\label{eq:poly-q}\\
b & = & \sum_{\theta\in\Theta_{\leq k}}a(\theta)\cdot g_{\theta}.\label{eq:poly-r}
\end{eqnarray}
We can conclude that 
\begin{equation}
\left[q\right]_{k}\underset{\eqref{eq:poly-q}}{=}\left[\sum_{\theta\in\Theta_{\leq n}}a(\theta)\cdot g_{\theta}\right]_{k}=\left[\sum_{\theta\in\Theta_{\leq k}}a(\theta)\cdot g_{\theta}\right]_{k}\underset{\eqref{eq:poly-r}}{=}\left[b\right]_{k},\label{eq:q=00003Dr}
\end{equation}
where the second identity holds since, whenever $\theta$ has degree
$i$ in $X$, for $i\in\mathbb{N}$, then $g_{\theta}$ contains a
factor $X^{i}$. Hence, for $\theta\in\Theta_{\leq n}\setminus\Theta_{\leq k}$,
no terms of the polynomial $g_{\theta}$ appear in 
$\left[\sum_{\theta\in\Theta_{\leq n}}a(\theta)\cdot g_{\theta}\right]_{k}.$ 
\end{proof}

Recall the polynomial $q'$ from (\ref{eq:poly-div}); it remains
to apply polynomial division as in (\ref{eq:poly-div}) to recover
$[q']_{k}$ from $[b]_{k}$. To this end, we observe that the constant
coefficient in $(1+X^{2})^{|S|}$ is $1$, and that all coefficients
of $(1+X^{2})^{|S|}$ can be computed by a closed formula. We can
thus divide $[b]_{k}=[q]_{k}$ by $[(1+X^{2})^{|S|}]_{k}$ via truncated
polynomial division (Lemma~\ref{lem:PolyDiv}) to obtain $[q']_{k}$,
whose $k$-th coefficient counts the $k$-defect matchings in $\mathcal{N}$,
as in the idealized setting discussed before.
\end{proof}
Using a combination of truncated polynomial division (Lemma~\ref{lem:PolyDiv})
and interpolation, we compute the coefficients of $[p]_{k}$ with
oracle access for $\pDefMatch$. This completes the reduction from
$\pRestrDefMatch$ to $\pDefMatch$.
\begin{lem}
\label{lem: dual match gives polynomial}We can compute $[p]_{k}$
by a Turing fpt-reduction to $\pDefMatch$ such that all queries have maximum parameter $k$.\end{lem}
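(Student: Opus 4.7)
The plan is to exploit the factorization $\mu(G_{S,\ell}) = f_{\ell} \cdot p(X,\ell)$ from Lemma~\ref{lem: weighted-matching-sum}, combined with $\pDefMatch$-queries on the planar graphs $G_{S,\ell_0}$ for suitably many integer values of $\ell_0$, and then to assemble $[p]_k$ by polynomial interpolation in $\ell$. Concretely, I pick $k+1$ pairwise distinct positive integers, say $\ell_0 \in \{1,2,\ldots,k+1\}$. For each such $\ell_0$, the graph $G_{S,\ell_0}$ is planar and of size polynomial in $n$ and $k$, so I may query the $\pDefMatch$-oracle on $(G_{S,\ell_0}, k')$ for every $k' \in \{0,1,\ldots,k\}$; the returned numbers $\#\DM_{k'}[G_{S,\ell_0}]$ are exactly the first $k+1$ coefficients of $\mu(G_{S,\ell_0},X)$, so I obtain $[\mu(G_{S,\ell_0})]_k$. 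Crucially, every oracle call carries parameter at most $k$.

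Next, I want to pass from $[\mu(G_{S,\ell_0})]_k$ to $[p(X,\ell_0)]_k$. By Lemma~\ref{lem: weighted-matching-sum}, $\mu(G_{S,\ell_0}) = f_{\ell_0} \cdot p(X,\ell_0)$, where $f_{\ell_0}(X) = (1+X^2)^{|S|(\ell_0-1)}$ is a \emph{known} polynomial whose coefficients can be computed by a closed formula and whose constant term equals $1$. This is precisely the setting of truncated polynomial division: applying Lemma~\ref{lem:PolyDiv} with divisor $[f_{\ell_0}]_k$ and dividend $[\mu(G_{S,\ell_0})]_k$ yields $[p(X,\ell_0)]_k$ in $\mathcal{O}(k^2)$ arithmetic operations. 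Since truncating in $X$ commutes with substituting a value for $\ell$, this number is exactly the evaluation $[p]_k(X,\ell_0) \in \mathbb{Z}[X]$ of the bivariate polynomial $[p]_k$ at $\ell=\ell_0$.

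Finally, I recover the bivariate polynomial $[p]_k \in \mathbb{Z}[X,\ell]$ from its evaluations at $\ell \in \{1,\ldots,k+1\}$ by univariate interpolation in $\ell$, coefficient of $X^j$ by coefficient of $X^j$. Fact~\ref{fact: monomials of p} is used precisely here: it guarantees that, for each $j \in \{0,\ldots,k\}$, the coefficient of $X^j$ in $[p]_k$ is a polynomial in $\ell$ of degree at most $j \le k$, so $k+1$ interpolation points suffice. In total, the reduction makes $(k+1)^2$ oracle calls, all with parameter at most $k$, and performs $\mathrm{poly}(n,k)$ arithmetic operations.

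The conceptually delicate step is the middle one: we cannot afford to read off all coefficients of $\mu(G_{S,\ell_0})$, since its degree in $X$ can be as large as $|V(G_{S,\ell_0})|$, which would force queries with parameter unbounded in $k$. Truncated polynomial division (Lemma~\ref{lem:PolyDiv}) is what makes the argument work with a parameter-preserving number of queries, by exploiting the nonzero constant term of $f_{\ell_0}$ to recover only the first $k+1$ coefficients of $p(X,\ell_0)$ from only the first $k+1$ coefficients of the product.
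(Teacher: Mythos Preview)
Your proposal is correct and follows essentially the same route as the paper: query $\pDefMatch$ on $G_{S,\ell_0}$ for $k'=0,\ldots,k$ to obtain $[\mu(G_{S,\ell_0})]_k$, use truncated division by $f_{\ell_0}$ (constant term $1$) to get $[p(\cdot,\ell_0)]_k=[p]_k(\cdot,\ell_0)$, and then interpolate in $\ell$ using the degree bound from Fact~\ref{fact: monomials of p}. Your choice of evaluation points $\{1,\ldots,k+1\}$ rather than $\{0,\ldots,k\}$ is in fact slightly cleaner, since it sidesteps the edge case $\ell_0=0$ where $f_{\ell_0}=(1+X^2)^{-|S|}$ is only a power series rather than a polynomial.
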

\begin{proof}
For $\xi$ with $0\leq\xi\leq k$, let $f_{\xi}\in\mathbb{Z}[X]$
be the evaluation of the expression $f_{\ell}$ defined in Lemma~\ref{lem: weighted-matching-sum}
at $\ell=\xi$. Define $p_{\xi}^{(k)}\in\mathbb{Z}[X]$ by
\begin{equation}
p_{\xi}^{(k)}:=\left[\mu(G_{S,\xi})/f_{\xi}\right]_{k}.\label{eq: definition eval-poly}
\end{equation}

\begin{claim}
\label{claim: eval p truncated}We have $p_{\xi}^{(k)}=[p(\cdot,\xi)]_{k}=[p]_{k}(\cdot,\xi)$.\end{claim}
\begin{proof}
The first identity holds by the definition of $p$ in (\ref{eq:poly-p}),
and by the definition of $p_{\xi}^{(k)}$. The second identity holds
because, for all $t\in\mathbb{N}$, the coefficient of $X^{t}$ in
$p$ is a polynomial in $\ell$ and does not depend on $X$. Hence
we may arbitrarily interchange (i) the operation of substituting $\ell$
by expressions not depending on $X$ (and by numbers $\xi\in\mathbb{N}$
in particular), and (ii) the operation of truncating to the first
$k$ coefficients.
\end{proof}
Recall that $a_{t}\in\mathbb{Z}[\ell]$ for $t\in\mathbb{N}$ denotes
the coefficient of $X^{t}$ in $p$, which has degree at most $k$
(in the indeterminate $\ell$) by Fact~\ref{fact: monomials of p}.
Hence, for fixed $t\in\mathbb{N}$, if we knew the values $a_{t}(0),\ldots,a_{t}(k)$,
we could recover the coefficients of $a_{t}\in\mathbb{Z}[\ell]$ via
univariate polynomial interpolation.
But for $0\leq\xi,t\leq k$, we can obtain the value $a_{t}(\xi)$
as the coefficient of $X^{t}$ in $p_{\xi}^{(k)}$. This follows from
Claim~\ref{claim: eval p truncated}. It remains to compute the polynomials
$p_{0}^{(k)},\ldots,p_{k}^{(k)}$ with an oracle for $\pDefMatch$:
First, we observe that the constant coefficient in $f_{\xi}$ is $1$
for all $0\leq\xi\leq k$, so we can apply the definition of $p_{\xi}^{(k)}$
from (\ref{eq: definition eval-poly}) and truncated polynomial division
(Lemma~\ref{lem:PolyDiv}) to compute $p_{\xi}^{(k)}$ from $[\mu(G_{S,\xi})]_{k}$
and $f_{\xi}$. 

It remains to compute $[\mu(G_{S,\xi})]_{k}$ and $f_{\xi}$.
Note that the coefficients of $f_{\xi}$ admit a closed expression
by definition, and that $[\mu(G_{S,\xi})]_{k}$ can be computed by
querying the oracle for $\pDefMatch$ to obtain the number of matchings
in $G_{S,\xi}$ with $0,\ldots,k$ defects.
\end{proof}
We recapitulate the proof of Theorem \ref{main thm: defect-match} in the following.
\begin{proof}[Proof of Theorem~\ref{main thm: defect-match}]
By Theorem~\ref{thm: ApexPerfMatch is hard}, the problem $\pApexPerfMatch$
is $\sharpWone$-hard, and we have reduced it to $\pRestrDefMatch$
in Lemma~\ref{lem: hardness of restricted dual matchings}. By Lemma~\ref{lem: dual match gives polynomial},
we can use oracle calls to $\pDefMatch$ with maximum parameter $k$
to compute the polynomial $[p]_{k}$, and by Lemma~\ref{lem: polynomial gives restricted dual matchings},
the coefficients of $[p]_{k}$ allow to recover the solution to $\pRestrDefMatch$
in polynomial time. These two steps establish the second reduction
in (\ref{eq:reduction-chain}). 

Note that both reductions incur only linear blowup on the parameter. 
Hence, the lower bound of $n^{\Omega(k / \log k)}$ for $\pApexPerfMatch$ under $\sharpETH$ from Theorem~\ref{thm: ApexPerfMatch is hard} carries over to $\pDefMatch$.
\end{proof}

\section{\label{sec:Algorithm-for-restricted-apices}Apices with few adjacent
faces}

We prove Theorem~\ref{thm: apex} and give an FPT-algorithm
for a restricted version of the problem $\PerfMatch$ on graphs $G$
with an apex set $A$ of size $k$ such that every apex can see only
a bounded number of faces. To this end, we first prove a stronger
version of Theorem~\ref{thm: face-defects} that allows us to compute
$\MatchSum(G)$ rather than just count matchings in $G$.
\begin{thm}
\label{thm: MatchSum-boundedfaces}Assume we are given a drawing of
a planar graph $G$ with vertex-weights $w:V(G)\to\mathbb{Q}$ and
faces $F_{1},\ldots,F_{s}$ for $s\in\mathbb{N}$ such that all vertices
$v\in V(G)$ with $w(v)\neq0$ satisfy $v\in V(F_{1})\cup\ldots\cup V(F_{s})$.
Then we can compute $\MatchSum(G)$ in time $\mathcal{O}(2^{s}\cdot n^{3})$.
\end{thm}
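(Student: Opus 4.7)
The plan is to express $\MatchSum(G)$ as a sum of $2^s$ evaluations of $\PerfMatch$ on planar edge-weighted graphs; since each evaluation costs $\mathcal{O}(n^3)$ by Theorem~\ref{thm: pm-planar-algo}, this gives the claimed $\mathcal{O}(2^s \cdot n^3)$ bound. The reduction proceeds via planar gadgets glued into each distinguished face.

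For every face $F_i$ and every bit $b\in\{0,1\}$, I would construct a planar weighted gadget $H_i^b$ drawn inside $F_i$ that attaches to the rest of the graph only at the boundary vertices $V(F_i)$. For such a gadget $H$, its signature $\sigma_H(X)$ is defined for $X\subseteq V(F_i)$ as the total edge-weight of matchings of $H$ that match every interior vertex of $H$ and match exactly the boundary subset $X$. A parity count of matched vertices forces $\sigma_H(X)=0$ unless $|X|$ has a fixed parity, so a single planar gadget cannot realise the full product signature $\prod_{v\in X}w(v)$. The gadgets are therefore chosen with opposite parities so that
\[
\sigma_{H_i^0}(X)+\sigma_{H_i^1}(X)\;=\;\prod_{v\in X}w(v) \qquad (X\subseteq V(F_i)).
\]

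Let $G_b$ be obtained from $G$ by gluing $H_i^{b_i}$ inside face $F_i$ for each $i$, and each $b\in\{0,1\}^s$; the result is planar. Decomposing a perfect matching of $G_b$ uniquely into its restriction to $G$ and its restrictions to the individual gadgets, and writing $X_i\subseteq V(F_i)$ for the boundary vertices of $F_i$ that are matched by the gadget in $F_i$, one obtains
\[
\PerfMatch(G_b)\;=\;\sum_{X_1,\ldots,X_s}\Bigl(\prod_{i\in[s]}\sigma_{H_i^{b_i}}(X_i)\Bigr)\cdot N(X_1,\ldots,X_s),
\]
where $N(X_1,\ldots,X_s)$ counts matchings $M$ of $G$ satisfying $\usat(M)\cap V(F_i)=X_i$ for all $i$ and $\usat(M)\subseteq\bigcup_i V(F_i)$. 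Summing over $b\in\{0,1\}^s$ and swapping the two sums, the inner product collapses via the key identity to $\prod_i\prod_{v\in X_i}w(v)$, so $\sum_b \PerfMatch(G_b)$ equals $\sum_M\prod_{v\in\usat(M)}w(v) = \MatchSum(G)$, using that weight-zero vertices outside $\bigcup_i V(F_i)$ contribute nothing.

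The main obstacle is the gadget construction. A tempting per-vertex approach, attaching at each $v\in V(F_i)$ either a pendant edge of weight $w(v)$ (realising signature $(0,w(v))$ at $v$) or a length-two pendant path (realising signature $(1,0)$ at $v$), yields the correct per-vertex contributions but needs one bit per boundary vertex and hence produces $2^{\sum_i|V(F_i)|}$ graphs rather than $2^s$. The correct construction, following Valiant~\cite{Valiant2008}, is a single planar per-face gadget that pools the per-vertex parity choices into one global bit; this is exactly the case $s=1$ already handled by Valiant, and his two gadgets can be replicated independently in each of the $s$ distinguished face interiors, since those interiors sit in pairwise disjoint regions of the plane. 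A minor bookkeeping issue is that a single vertex may lie on two distinguished faces at once, which can be addressed by a small preprocessing step (e.g.\ subdividing shared edges) that makes the gadget interiors genuinely disjoint.
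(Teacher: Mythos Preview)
Your proposal is correct and follows essentially the same route as the paper: insert Valiant's two parity gadgets into each distinguished face and sum the resulting $2^{s}$ planar $\PerfMatch$ values. The only difference is in the shared-vertex bookkeeping, which the paper resolves not by subdivision but by first partitioning $\bigcup_{i} V(F_{i})$ into disjoint blocks $B_{i}\subseteq V(F_{i})$ (assigning each shared vertex arbitrarily to one face) and attaching the $i$-th gadget only to $B_{i}$; this keeps the decomposition clean without modifying $G$.
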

\begin{proof}
We first create a partition $B_{1},\ldots,B_{s}$ of $\bigcup_{i\in[s]}V(F_{i})$
such that $B_{i}\subseteq F_{i}$ for $i\in[s]$ and $B_{i}\cap B_{j}=\emptyset$
for $i\neq j$. This can be achieved trivially by assigning each vertex
that occurs in several faces $F_{i}$ to some arbitrarily chosen set
$B_{i}$. 

Now we define a type $\theta_{M}\in\{0,1\}^{s}$ for each $M\in\M[G]$.
For $i\in[s]$, we define 
\[
\theta_{M}(i):=\begin{cases}
1 & |\usat(M)\cap B_{i}|\,\mathrm{odd},\\
0 & |\usat(M)\cap B_{i}|\,\mathrm{even.}
\end{cases}
\]
For $\theta\in\{0,1\}^{s}$, let $\M_{\theta}[G]$ denote the set
of matchings $M\in\M[G]$ with $\theta_{M}=\theta$, and define 
\[S_{\theta}=\sum_{M\in\M_{\theta}[G]}\prod_{v\in\usat(M)}w(v).\]

It is clear that $\MatchSum(G)=\sum_{\theta\in\{0,1\}^{s}}S_{\theta}$.
We show how to compute $S_{\theta}$ for fixed $\theta$ in time $\mathcal{O}(n^{3})$
by reduction to $\PerfMatch$ in planar graphs. For this argument,
we momentarily define $\MatchSum(G)$ on graphs that have vertex-
and edge-weights $w:V(G)\cup E(G)\to\mathbb{Q}$:
\[
\MatchSum(G)=\sum_{M\in\M[G]}\left(\prod_{v\in\usat(M)}w(v)\right)\left(\prod_{e\in M}w(e)\right).
\]

As shown in the proof of Theorem~3.3 in \cite{Valiant2008}, and in Example~15 in \cite{DBLP:journals/corr/Curticapean15a}, for every $t\in\mathbb{N}$, there exist explicit
planar graphs $D_{t}^{0}$ and $D_{t}^{1}$ with $\mathcal{O}(t)$ vertices, which contain special vertices $u_{1},\ldots,u_{t}$
such that all of the following holds:
\begin{enumerate}
\item The graphs $D_{t}^{0}$ and $D_{t}^{1}$ can be drawn in the plane
with $u_{1},\ldots,u_{t}$ on their outer faces.
\item Let $H$ be a vertex- and edge-weighted graph with distinct vertices $X=\{v_{1},\ldots,v_{t}\}\subseteq V(H)$
and let $H'$ be obtained from $H$ by placing a disjoint copy of
$D_{t}^{0}$ into $H$ and connecting $v_{i}$ to $u_{i}$ with an
edge of weight $w(v_{i})$ for all $i\in[t]$. Assign weight $0$
to the vertices $v_{i}$ and to all vertices of $D_{t}^{0}$. Then 
\begin{equation}
\MatchSum(H')=\sum_{\substack{M\in\M[H]\\
|\usat(M)\cap X|\,\mathrm{even}
}
}\left(\prod_{v\in\usat(M)}w(v)\right)\left(\prod_{e\in M}w(e)\right)
\label{eq: matchsumH'}
\end{equation}

\item The above statement also applies for $D_{t}^{1}$, but the corresponding sum in (\ref{eq: matchsumH'})
ranges over those $M\in\M[H]$ where $|\usat(M)\cap X|$ is odd
rather than even.
\end{enumerate}
We observe that inserting $D_{t}^{0}$ or $D_{t}^{1}$ into the face
of a planar graph preserves planarity. Hence, we can insert $D_{|B_{i}|}^{\theta(i)}$
at the vertices $B_{i}$ along face $F_{i}$ in $G$, for each $i\in[s]$,
and obtain a planar graph $G_{\theta}$. By construction, we have
$\MatchSum(G_{\theta})=S_{\theta}$. Furthermore, all vertex-weights
in $G_{\theta}$ are $0$ by construction, so we actually have $\MatchSum(G_{\theta})=\PerfMatch(G_{\theta})$.
Since $G_{\theta}$ is planar, we can evaluate $\PerfMatch(G_{\theta})$
in time $\mathcal{O}(n^{3})$, thus concluding the proof.
\end{proof}

Note that the above theorem allows us to recover the number of $k$-defect matchings in $G$ that have all defects on fixed distinguished faces,
for any $k\in\mathbb{N}$: Let $G_{X}$ be
obtained from $G$ by assigning weight $X$ to each vertex. Then $p :=\MatchSum(G_{X})$
is a polynomial of degree at most $n$ and can be interpolated from
evaluations $p(0),\ldots p(n)$, but each of these evaluations can
be computed in time $\mathcal{O}(2^{s}\cdot n^{3})$ by Theorem~\ref{thm: MatchSum-boundedfaces}.
As we know, the $k$-th coefficient of $p(X)$ is equal to the number
of $k$-defect matchings in $G$.

In the following, we extend this argument by using a variant of multivariate
polynomial interpolation (Lemma~\ref{lem: multivar-interpolation})
that applies when we do not require the values of \emph{all} coefficients,
but rather only those in a ``slice'' of total degree $k$, for fixed
$k\in\mathbb{N}$. Here, the polynomial $p$ to be interpolated features
a distinguished indeterminate $X$, and we wish to extract the coefficient
$a_{k}$ of $X^{k}$, which is in turn a polynomial. Under certain
restrictions, this can be achieved with $f(k)\cdot n$ evaluations,
where $n$ denotes the degree of $X$ in $p$.
\begin{lem}
\label{lem: sliced grid interpolation} Let $p\in\mathbb{Z}[X,\lambda]$
be a multivariate polynomial in the indeterminates $X$ and $\vec{\lambda}=(\lambda_{1},\ldots,\lambda_{t})$.
Consider $p\in(\mathbb{Z}[\vec{\lambda}])[X]$ and assume that $p$
has degree $n$ in $X$, and that for all $s\in\mathbb{N}$, the coefficient
$a_{s}\in\mathbb{Z}[\vec{\lambda}]$ of $X^{s}$ in $p$ has total
degree at most $s$. Let $k\in\mathbb{N}$ be a given parameter, and
let $\Xi=\Xi_{0}\times\ldots\times\Xi_{t}\subseteq\mathbb{Q}^{t+1}$
with $|\Xi_{0}|=n+1$ and $|\Xi_{i}|=k+1$ for all $i>0$. Then we
can compute the coefficients of the polynomial $a_{k}\in\mathbb{Z}[\vec{\lambda}]$
with $\mathcal{O}(|\Xi|^{3})$ arithmetic operations when given as
input the set $\{(\xi,p(\xi))\mid\xi\in\Xi\}$.
\end{lem}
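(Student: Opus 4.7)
The strategy is a two-stage interpolation: first interpolate away the distinguished variable $X$ to extract evaluations of $a_k$ on a grid, then apply Lemma~\ref{lem: multivar-interpolation} to recover $a_k$ itself.

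First, for every fixed tuple $\vec{\xi}=(\xi_{1},\ldots,\xi_{t})\in\Xi_{1}\times\cdots\times\Xi_{t}$, I would consider the univariate polynomial $p(X,\vec{\xi})\in\mathbb{Z}[X]$, which has degree at most $n$. Since $|\Xi_{0}|=n+1$, the given evaluations $\{(\zeta,p(\zeta,\vec{\xi}))\mid\zeta\in\Xi_{0}\}$ at the distinct points of $\Xi_{0}$ determine $p(X,\vec{\xi})$ uniquely. Univariate Lagrange (or Vandermonde) interpolation recovers all of its coefficients in $\mathcal{O}((n+1)^{3})$ arithmetic operations; in particular, it yields the coefficient of $X^{k}$, which by definition equals $a_{k}(\vec{\xi})$. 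Doing this for every $\vec{\xi}\in\Xi_{1}\times\cdots\times\Xi_{t}$ produces the evaluation table $\{(\vec{\xi},a_{k}(\vec{\xi}))\mid\vec{\xi}\in\Xi_{1}\times\cdots\times\Xi_{t}\}$ in total time $\mathcal{O}((k+1)^{t}\cdot(n+1)^{3})=\mathcal{O}(|\Xi|^{3})$.

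Second, I would feed this table into Lemma~\ref{lem: multivar-interpolation}. The hypothesis that the coefficient $a_{k}\in\mathbb{Z}[\vec{\lambda}]$ of $X^{k}$ has total degree at most $k$ in $\vec{\lambda}$ implies that the individual degree of each $\lambda_{i}$ in $a_{k}$ is also bounded by $k$. Since the grid $\Xi_{1}\times\cdots\times\Xi_{t}$ satisfies $|\Xi_{i}|=k+1$ for every $i\in[t]$, the preconditions of Lemma~\ref{lem: multivar-interpolation} are met with $d_{i}=k$. Invoking it recovers the coefficients of $a_{k}$ with $\mathcal{O}((k+1)^{3t})$ arithmetic operations, which is again within $\mathcal{O}(|\Xi|^{3})$.

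Putting the two stages together produces the coefficients of $a_{k}$ within the claimed budget. The only conceptual point worth verifying carefully is the step that moves from ``total degree $\leq k$'' to ``each individual degree $\leq k$'' so that the rectangular grid in $\vec{\lambda}$ of side $k+1$ really is sufficient for Lemma~\ref{lem: multivar-interpolation}; this is immediate because any monomial of total degree at most $k$ uses each variable with exponent at most $k$. Everything else is bookkeeping on the sizes of the two interpolation stages and confirming that the arithmetic counts combine to $\mathcal{O}(|\Xi|^{3})$.
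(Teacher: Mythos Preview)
Your proposal is correct and follows essentially the same two-stage approach as the paper: fix $\vec{\xi}\in\Xi_{1}\times\cdots\times\Xi_{t}$, interpolate in $X$ to read off $a_{k}(\vec{\xi})$, then apply Lemma~\ref{lem: multivar-interpolation} on the resulting grid. Your write-up is in fact slightly more careful than the paper's, since you explicitly verify that total degree $\leq k$ implies individual degree $\leq k$ and track the arithmetic costs of both stages against $|\Xi|^{3}$.
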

\begin{proof}
We consider the grid $\Xi'$ defined by removing the first component
from $\Xi$, that is, $\Xi'=\Xi_{1}\times\ldots\times\Xi_{t}.$
Observe that $p(\cdot,\xi')\in\mathbb{Z}[X]$ holds for $\xi'\in\Xi'$.
Write $\Xi_{0}=\{c_{0},\ldots,c_{n}\}$ and note that, for
fixed $\xi'\in\Xi'$, our input contains all evaluations 
\[
p(c_{0},\xi'),\ldots,p(c_{n},\xi'),
\]
so we can use univariate interpolation
to determine the coefficient of $X^{k}$ in $p(\cdot,\xi')$. This
coefficient is equal to $a_{k}(\xi')$ by definition. 
By performing this process for all $\xi'\in\Xi'$, we can evaluate
$a_{k}(\xi')$ on all $\xi'\in\Xi'$, and hence interpolate the polynomial
$a_{k}\in\mathbb{Z}[\vec{\lambda}]$ via grid interpolation (Lemma~\ref{lem: multivar-interpolation}).
\end{proof}

This brings us closer to the proof of Theorem~\ref{thm: apex}. To
proceed, we first consider the case that $A$ is an independent
set; the full algorithm is obtained by reduction to this case.
\begin{lem}
\label{M:Apex-Algo: lem: Algo with independent apices}Let $G$ be
an edge-weighted graph, given as input together with an independent
set $A\subseteq V(G)$ of size $k$, a planar drawing of $H=G-A$,
and faces $F_{1},\ldots,F_{s}$ that contain all neighbors of $A$.
Then we can compute $\PerfMatch(G)$ in time $k^{\mathcal{O}(2^{k})}\cdot2^{\mathcal{O}(s)}\cdot n^{4}$.\end{lem}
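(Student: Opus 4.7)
The plan is to express $\PerfMatch(G)$ as a single coefficient of a polynomial in $\MatchSum(H)$ that we can interpolate from evaluations delivered by Theorem~\ref{thm: MatchSum-boundedfaces}. Since $A$ is an independent set, every perfect matching of $G$ decomposes uniquely as a $k$-defect matching $N$ of $H := G - A$ together with a bijection $\sigma\colon \usat(N) \to A$, contributing $\bigl(\prod_{e \in N} w(e)\bigr)\cdot \prod_{v \in \usat(N)} w(v\sigma(v))$. Summing over $\sigma$ with $N$ fixed yields $\mathrm{weight}(N)\cdot\mathrm{perm}(W_N)$, where $W_N$ is the $k\times k$ matrix with entries $w(v a_i)$ for $v \in \usat(N)$ and $a_i \in A$.

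Introduce formal indeterminates $\lambda$ and $\vec X = (X_1, \ldots, X_k)$, one $X_i$ per apex $a_i$, and assign to each $v \in V(H)$ the polynomial vertex-weight $\lambda \cdot P_v(\vec X)$, where $P_v(\vec X) := \sum_{i=1}^k w(v a_i)\, X_i$ (with $w(va_i) = 0$ if $va_i \notin E(G)$); keep the original edge-weights on $H$. Viewing $\MatchSum(H)$ as an element of $\mathbb{Q}[\lambda, \vec X]$, the coefficient of $\lambda^k$ isolates $k$-defect matchings of $H$, and for each such $N$ the coefficient of $X_1 X_2 \cdots X_k$ inside $\prod_{v \in \usat(N)} P_v(\vec X)$ is exactly $\mathrm{perm}(W_N)$. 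Hence
\[
\PerfMatch(G) \;=\; \bigl[\lambda^k \, X_1 X_2 \cdots X_k\bigr]\,\MatchSum(H),
\]
and matchings whose defect set contains a vertex with no apex neighbor drop out automatically because the corresponding $P_v$ is identically zero.

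To extract the indicated coefficient, invoke Lemma~\ref{lem: sliced grid interpolation} with $\lambda$ as the distinguished variable and $\vec X$ as the auxiliary variables (so $t = k$): $\lambda$ has degree at most $n$ in $\MatchSum(H)$, and since each $P_v$ is linear in $\vec X$ the coefficient of $\lambda^s$ has total degree at most $s$ in $\vec X$. A grid $\Xi$ of size $(n+1)(k+1)^k$ therefore suffices to recover $a_k \in \mathbb{Q}[\vec X]$, from which the coefficient of $X_1 \cdots X_k$ is read off trivially. Each evaluation $\MatchSum(H; \vec \xi)$ at a grid point reduces directly to an instance of Theorem~\ref{thm: MatchSum-boundedfaces}: substituting rational values produces rational vertex-weights, and every vertex with possibly nonzero weight has an apex neighbor and hence lies in $F_1 \cup \ldots \cup F_s$ by hypothesis. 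Consequently each evaluation costs $\mathcal{O}(2^s n^3)$, the overall runtime is $\mathcal{O}(|\Xi|\cdot 2^s n^3 + |\Xi|^3)$, and this fits comfortably inside $k^{\mathcal{O}(2^k)} \cdot 2^{\mathcal{O}(s)} \cdot n^4$. The main technical point is setting up the polynomial identity so that Lemma~\ref{lem: sliced grid interpolation} applies with the correct degree bookkeeping; after that the remainder of the argument is mechanical.
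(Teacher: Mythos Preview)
Your argument is correct, and it takes a genuinely different route from the paper's.

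The paper first normalizes all apex--planar edges to unit weight, then classifies $k$-defect matchings of $H$ by their \emph{type}, namely the multiset $\{N_G(v)\cap A : v\in\usat(N)\}$. This leads to the identity $\PerfMatch(G)=\sum_{t} P_t\cdot \PerfMatch(S_t)$, where $S_t$ is a tiny bipartite graph depending only on $t$, and $P_t$ is recovered by interpolating a polynomial in an indeterminate $X$ together with $2^{k}$ indeterminates $\lambda_R$ indexed by subsets $R\subseteq A$. The grid therefore has size $\mathcal{O}(n\cdot k^{2^{k}})$, which is where the $k^{\mathcal{O}(2^{k})}$ in the runtime comes from.

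Your approach bypasses both the unit-weight normalization and the type decomposition: you encode the apex side directly via the linear forms $P_v(\vec X)=\sum_i w(va_i)X_i$, so that the multilinear coefficient $[X_1\cdots X_k]$ automatically produces the permanent $\mathrm{perm}(W_N)$, which is exactly the weighted sum over bijections $\usat(N)\to A$. This uses only $k$ auxiliary indeterminates rather than $2^{k}$, so the interpolation grid has size $\mathcal{O}(n\cdot k^{k})$ and your runtime is actually $k^{\mathcal{O}(k)}\cdot 2^{\mathcal{O}(s)}\cdot n^{4}$, strictly better than the bound stated in the lemma. The permanent trick also handles arbitrary apex-edge weights with no preprocessing, which the paper's type-based argument cannot do without the Remark~\ref{M:Apex-Algo: rem: Unit-weight edges} detour. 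In short, your proof is shorter, sharper, and yields a stronger bound; the paper's version is more combinatorially transparent about \emph{why} only finitely many ``shapes'' of defect configurations matter, but pays for that transparency with an exponentially larger variable set.
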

\begin{rem}
\label{M:Apex-Algo: rem: Unit-weight edges}We may assume that every
edge $av\in E(G)$ with $a\in A$ and $v\in V(G)\setminus A$ has
weight $1$: Otherwise, replace $av$ by a path $ar_{1}r_{2}v$ with
fresh vertices $r_{1},r_{2}$, together with edges $ar_{1}$ and $r_{1}r_{2}$
of unit weight, and an edge $r_{2}v$ of weight $w(e)$. This clearly
preserves the apex number, the value of $\PerfMatch$, and ensures
that every apex is only incident with unweighted edges.\end{rem}
\begin{proof}
Recall that $\DM_{k}[H]$ denotes the set of $k$-defect matchings
in $H$. By Remark~\ref{M:Apex-Algo: rem: Unit-weight edges}, we
can assume that all edges incident with $A$ have unit weight. Let
\[
\mathcal{C}=\{M\in\DM_{k}[H]\mid\usat(M)\subseteq N_{G}(A)\}.
\]
Given any matching $M\in\mathcal{C}$, let $t(M)$ denote its \emph{type}\footnote{Please note that these types have no connection to those used in the proof of Theorem~\ref{thm: MatchSum-boundedfaces}.}, 
which is defined as the following \emph{multiset} with precisely
$k$ elements from $2^{A}$: 
\[
t(M)=\{N_{G}(v)\cap A\mid v\in\usat(M)\}.
\]
For the set of all such types, we write $\mathcal{T}=\{t(M)\mid M\in\mathcal{C}\}$
and observe that $|\mathcal{T}|\leq(2^{k})^{k}=2^{k^{2}}$. For $t\in\mathcal{T}$,
define a graph $S_{t}$ as follows: Create an independent set $[k]$,
corresponding to $A$. Then, for each $N\in t$, create a vertex $v_{N}$
that is adjacent to all of $N\subseteq[k]$. We note that every \emph{perfect}
matching $M\in\PM[G]$ can be decomposed uniquely as $M=B(M)\dot{\cup}I(M)$
with a $k$-defect matching $B(M)\in\mathcal{C}$ and a perfect matching
$I(M)\in\PM[S_{t(B(M))}]$. That is, $B(M)=M-A$ and $I(M)=M[A\cup\usat(B(M))]$.
For $t\in\mathcal{T}$, let 
\begin{eqnarray*}
\mathcal{C}_{t} & = & \{M\in\mathcal{C}\mid t(M)=t\},\\
P_{t} & := & \sum_{N\in\mathcal{C}_{t}}\prod_{e\in N}w(e).
\end{eqnarray*}
It is clear that $\{\mathcal{C}_{t}\}_{t\in\mathcal{T}}$ partitions
$\mathcal{C}$, and this implies 
\begin{equation}
\PerfMatch(G)=\sum_{t\in\mathcal{T}}P_{t}\cdot\PerfMatch(S_{t}).\label{M:eq: Bounded-face apex, Base+Interface matching}
\end{equation}
To see this, note that each perfect matching of type $t$ can be obtained
by extending some matching $M\in\mathcal{C}_{t}$ (all of which have
$k$ defects) by a perfect matching from $\usat(M)$ to $A$, which
is precisely a perfect matching of $S_{t}$. Note that we require
here that edges between $\usat(M)$ and $A$ have unit weight, otherwise
the graphs $S_{t}$ would have to be edge-weighted as well and might
no longer depend on $t$ only, but would also have to incorporate
the edge-weights of $G$.

Since $|E(S_{t})|\leq k^{2}$, we can compute $\PerfMatch(S_{t})$
in time $2^{\mathcal{O}(k^{2})}$ by brute force for each $t\in\mathcal{T}$.
Hence, we can use (\ref{M:eq: Bounded-face apex, Base+Interface matching})
to determine $\PerfMatch(G)$ in time $|\mathcal{T}|\cdot2^{\mathcal{O}(k^{2})}$
if we know $P_{t}$ for all $t\in\mathcal{T}$. In the remainder of
this proof, we show how to compute $P_{t}$ by using multivariate
polynomial interpolation and the algorithm for $\MatchSum$ presented
in Theorem~\ref{thm: MatchSum-boundedfaces}. To this end, define
indeterminates $\vec{\lambda}=\{\lambda_{R}\mid R\subseteq A\}$ corresponding
to subsets of the apices. Let $X$ denote an additional distinguished
indeterminate, and define the following polynomial $p\in\mathbb{Z}[X,\vec{\lambda}]$.
In this definition, we abbreviate $w(M):=\prod_{e\in M}w(e)$. 
\begin{equation}
p(X,\vec{\lambda}):=\sum_{M\in\mathcal{C}}w(M)\cdot X^{|\usat(M)|}\cdot\prod_{v\in\usat(M)}\lambda_{N_{G}(v)\cap A}.\label{eq: poly-p}
\end{equation}

For each type $t\in\mathcal{T}$, say $t=\{N_{1},\ldots,N_{k}\}$,
the coefficient of $X^{k}\cdot\lambda_{N_{1}}\cdot\ldots\cdot\lambda_{N_{k}}$
in $p$ is equal to $P_{t}$. Hence, we can extract $P_{t}$ for all
$t\in\mathcal{T}$ from the coefficients of the monomials in $p$
that have degree exactly $k$ in $X$. Let us denote these monomials
by $\mathfrak{N}$, and observe that each monomial $\nu\in\mathfrak{N}$
has total degree $k$ in $\vec{\lambda}$ by the definition of $p$ in
(\ref{eq: poly-p}). 

If we can evaluate $p$ on the elements $(r,\xi)$ from the grid $\Xi=[n+1]\times[k+1]^{2^{|A|}}$,
then we can compute the coefficients of all $\nu\in\mathfrak{N}$
in $p$, and thus $P_{t}$ for all $t\in\mathcal{T}$, by sliced grid
interpolation (Lemma~\ref{lem: sliced grid interpolation}). Note
that $|\Xi| \leq \mathcal{O}(n\cdot k^{2^{k}})$. We compute these evaluations
$p(r,\xi)$ as $p(r,\xi)=\MatchSum(H')$, where the vertex-weighted
graph $H'=H'(r,\xi)$ is obtained from $H$ via the weight function
\[
w(v):=\begin{cases}
0 & \mbox{if }v\notin N_{G}(A),\\
r\cdot\xi_{N_{G}(v)\cap A} & \mbox{otherwise.}
\end{cases}
\]

Since all vertices with non-zero weight in $H'$ are contained
in the faces $F_{1},\ldots,F_{s}$, we can compute $\MatchSum(H')$
in time $\mathcal{O}(2^{s}\cdot n^{3})$ with Theorem~\ref{thm: MatchSum-boundedfaces}.
We obtain the values $P_{t}$ for all $t\in\mathcal{T}$, so we obtain
$\PerfMatch(G)$ via (\ref{M:eq: Bounded-face apex, Base+Interface matching})
in the required time.
\end{proof}
It remains to lift Lemma~\ref{M:Apex-Algo: lem: Algo with independent apices}
to the case that $A$ is not an independent set. This follows easily
from the fact that, whenever $E(G)=E\dot{\cup}E'$, then every perfect
matching $M\in\PM[G]$ must match every vertex $v\in V(G)$ into exactly
one of the sets $E$ or $E'$.
\begin{proof}[Proof of Theorem~\ref{thm: apex}]
Let $\mathcal{A}=\mathcal{M}[G[A]]$ denote the set of (not necessarily
perfect) matchings of the induced subgraph $G[A]$, and note that
$|\mathcal{A}|\leq2^{k^{2}}$. For $M\in\mathcal{A}$, let $a_{M}=\PerfMatch(G_{M})$,
where $G_{M}$ is defined by keeping from $A$ only $\usat(M)$, and
then deleting all edges between the remaining vertices of $A$. We can compute
$a_{M}$ by Lemma~\ref{M:Apex-Algo: lem: Algo with independent apices},
since the remaining part of $A$ in $G_{M}$ is an independent set.
It is also easily verified that $\PerfMatch(G)=\sum_{M\in\mathcal{A}}a_{M}\cdot\prod_{e\in M}w(e)$,
so we can compute $\PerfMatch$ as a linear combination of $2^{k^{2}}$
values, each of which can be computed by Lemma~\ref{M:Apex-Algo: lem: Algo with independent apices}.\end{proof}

\section*{Acknowledgments}

The author wishes to thank D\'{a}niel Marx and Holger Dell for pointing out the
connection between counting perfect matchings in $k$-apex graphs and $\pDefMatch$
during the Dagstuhl Seminar 10481 on Computational Counting in 2010.
Furthermore, thanks to Mingji Xia for interesting discussions about this topic; in particular, Theorem~\ref{thm: face-defects} was found in joint work on combined
signatures back in 2013. Thanks also to Markus Bl\"aser for reading earlier drafts of this material as it appeared in my PhD thesis, and thanks to the reviewers of this version for providing helpful comments.

\bibliography{p278-curticapean}

\end{document}